\documentclass[11pt]{article}
\usepackage[margin=1in]{geometry}
\usepackage{latexsym, amscd, amsfonts, mathrsfs, amsmath, amssymb, amsthm, mathtools}
\usepackage{bm}
\usepackage{color}
\usepackage[noend]{algpseudocode}
\usepackage[colorlinks,citecolor=blue,linkcolor=blue,urlcolor=red,pagebackref]{hyperref}
\usepackage[capitalise,nameinlink]{cleveref}

\usepackage{stmaryrd, tikz-cd, enumitem, mathrsfs, bbm, algorithm, url, esint, listings, moreverb, makecell, thm-restate, soul, natbib}

\def\bE{\mathbb{E}}

\def\bP{\mathbb{P}}

\def\bR{\mathbb{R}}
\def\bZ{\mathbb{Z}}

\def\cA{\mathcal{A}}

\def\cE{\mathcal{E}}

\def\cX{\mathcal{X}}

\DeclareMathOperator\Bin{\mathrm{Bin}}

\DeclareMathOperator\supp{\mathrm{supp}}

\DeclareMathOperator\I{\mathsf{I}}

\DeclareMathOperator\s{\mathsf{s}}
\DeclareMathOperator\R{\mathsf{R}}
\DeclareMathOperator\N{\mathsf{N}}
\DeclareMathOperator\avgwdeg{\overline{\wdeg}}
\DeclareMathOperator\degen{\mathsf{degen}}
\DeclareMathOperator\wdeg{\mathrm{wdeg}}

\renewcommand{\setminus}{\backslash}

\newif\ifarxiv
\arxivtrue      %

\ifarxiv
  \newenvironment{proofsketch}
    {\begin{proof}[Proof sketch]}
    {\end{proof}}
\else
  \newenvironment{proofsketch}
    {\begin{proof}\textbf{Sketch} }
    {\end{proof}}
\fi

\newcommand{\rom}[1]{\textup{\uppercase\expandafter{\romannumeral#1}}}

\newtheorem{theorem}{Theorem}
\newtheorem{lemma}[theorem]{Lemma}
\newtheorem{proposition}[theorem]{Proposition}
\newtheorem{corollary}[theorem]{Corollary}

\newtheorem{fact}[theorem]{Fact}

\theoremstyle{definition}

\newcommand{\Parity}[0]{\textrm{PARITY}}

\newcommand{\Threshold}[1]{$#1$\textrm{-THRESHOLD}}
\newcommand{\Conn}[0]{\textrm{Graph-Connectivity}}
\newcommand{\stConn}[0]{$s$-$t$-\textrm{Connectivity}}
\newcommand{\AND}[0]{\textrm{AND}}
\newcommand{\Or}[0]{\textrm{OR}}
\newcommand{\eps}[0]{\varepsilon}
\newcommand{\defeq}[0]{\vcentcolon=}

\begin{document}
\title{A Unified Lower Bound on the Noisy Query Complexity \\ of Boolean Functions}
\author{Yuzhou Gu\thanks{\texttt{sevenkplus.g@gmail.com}} \and Xin Li\thanks{\texttt{lixints@cs.jhu.edu}} \and Yinzhan Xu\thanks{\texttt{xyzhan@ucsd.edu}}}
\date{}

\maketitle

\pagenumbering{gobble}

\begin{abstract}
  We study the query complexity of Boolean functions $f: \{0, 1\}^n \rightarrow \{0, 1\}$ in the noisy query model introduced by Feige, Raghavan, Peleg and Upfal [SICOMP 1994]. In this model, an algorithm can adaptively query the bits of an input vector, but each query result is independently flipped with constant probability $p \in (0, 1/2)$; repeated queries are allowed. The noisy query complexity $\mathsf{N}_p(f)$ of a function $f$ is defined as the minimum expected number of queries needed to compute $f(x)$ with error probability at most $1/3$, for the worst case input $x$.

We prove a general lower bound on $\mathsf{N}_p(f)$ based on degree statistics of certain subgraphs of the Boolean hypercube. This is the first general lower bound beyond those implied by the simple observation that $\mathsf{N}_p(f)$ is lower bounded by the randomized query complexity. We show that this recovers (up to a constant factor) most previously known lower bounds on the noisy query complexity of Boolean functions, providing a unified framework for understanding these results and simplifying the proofs in several cases. Furthermore, this resolves in the affirmative an open problem of Gu, Li and Xu [COLT 2025] that $\mathsf{N}_p(f) = \Omega(\mathsf{I}(f) \log \mathsf{I}(f))$, where $\mathsf{I}(f)$ denotes the total influence of $f$. We also apply our general lower bound to obtain tight bounds on the noisy query complexity for several new functions. 
\end{abstract}

\pagenumbering{arabic}

\section{Introduction} \label{sec:intro}

In modern computation architectures, especially in large systems, errors occur inevitably due to various reasons such as hardware failures, communication errors and environmental noise. It is thus crucial to design algorithms that are robust in the presence of errors. \citet{feige1994computing} formally defined the following extremely natural and basic noisy query model that incorporates errors: Let $f: \{0, 1\}^n \rightarrow \{0, 1\}$ be a Boolean function. Given an unknown fixed input $x \in \{0, 1\}^n$, a (potentially randomized) algorithm can query each bit $x_i$, but the returned bit could be flipped with some fixed constant probability $p$, and repeated queries are allowed. The goal is to compute $f(x)$ while minimizing the number of queries. There has been a lot of work studying this model~\citep{reischuk1991reliable,feige1992complexity,DBLP:journals/rsa/KenyonK94,evans1998average,zhu2023optimal, zhu2023noisy,wang2025noisy,gu2025tight}, as well as the related noisy comparison model \citep{horstein1963sequential,berlekamp1964block,burnashev1974interval,feige1992complexity,ben2008bayesian,Emamjomeh-Zadeh16,DereniowskiTUW19,wang2022noisy,gu2023optimal,wang2023variable,zhu2023optimal,DBLP:conf/stacs/DereniowskiLU25}.

However, despite much attention, even such a basic model has not been fully understood. Let $\N_p(f)$ denote the minimum expected number of noisy queries needed to compute $f$ with error probability $\le 1/3$ by an adaptive randomized algorithm. Clearly, $\N_p(f) \ge \Omega(\R(f))$, the randomized  query complexity for computing $f$ within error $1/3$, as computation with exact queries can only be easier than computation with noisy queries. Also, $\N_p(f) = O(\R(f) \log \R(f))$\footnote{Throughout the paper, we use the convention that $x \log x = 0$ for $x = 0$.} for the following reason: Given a randomized algorithm for computing $f$ with exact queries, one can repeat each query $O(\log \R(f))$ times in the noisy query model and take the majority results, so that each query will be correct with probability $1 - O(1 / \R(f))$. Hence, by union bound, all $\R(f)$ queries can be simulated correctly with constant probability. However, this $[\Omega(\R(f)), O(\R(f) \log \R(f))]$ range is not strong enough to fully determine $\N_p(f)$, and there exist $f$ where $\N_p(f) = \Theta(\R(f))$ and others where $\N_p(f) = \Theta(\R(f) \log \R(f))$. For instance, it is known that \Parity{} \citep{feige1994computing} and \Conn{}~\citep{gu2025tight} have $\N_p(f) = \Theta(\R(f) \log \R(f))$, and \AND{}, \Or{} have $\N_p(f) = \Theta(\R(f))$~\citep{feige1994computing}. Hence, it is not possible to fully characterize $\N_p(f)$ using only $\R(f)$, and it is also important to study $\N_p(f)$ with respect to other complexity measures.

Two important measures of Boolean functions are their sensitivity and influence. For any $x \in \{0, 1\}^n$, the sensitivity of $f$ at $x$, $\s(f, x)$, is defined as the number of $y$ such that $f(x) \ne f(y)$ and $x, y$ only differ at one bit; and the maximum sensitivity of $f$ at any input $x$ is denoted as $\s(f)$. On the other hand, the total influence $\I(f)$ is the same as the average sensitivity over all inputs, i.e., $\I(f) = \bE_{x \sim \{0, 1\}^n}[\s(f, x)]$. Both sensitivity and influence are central notions in Boolean function analysis and have been used in various contexts such as
hardness of approximation \citep{dinur2005hardness}, voting theory \citep{mossel2012quantitative} and random graph theory \citep{friedgut1996every}. Sensitivity and its variants have also been shown to be closely related to many query complexity measures (e.g., \citep{DBLP:journals/cc/NisanS94}).

It is well-known that $\R(f) = \Omega( \s(f))$ \citep{nisan1989crew} and hence we also have $\N_p(f) = \Omega(\s(f))$. In \citep{reischuk1991reliable}, the authors proved that the \emph{non-adaptive} noisy query complexity of any Boolean function $f$ is at least $\Omega(\s(f)\log \s(f))$, so a natural question is whether this can be extended to adaptive queries. This is not possible. For instance, the $\mathrm{OR}$ function has sensitivity $n$ and adaptive noisy query complexity $O(n)$ \citep{feige1994computing}. 

This motivates \citet{gu2025tight} to study the relationship between $\N_p(f)$ and $\I(f)$, as $\I(f) = \bE_{x \sim \{0, 1\}^n}[\s(f, x)]$ is the average sensitivity of $f$ which can be smaller than $\s(f)$. As before, the simple fact that $\N_p(f) \geq \Omega(\R(f))$ also implies $\N_p(f) = \Omega(\I(f))$. Can there be better bounds on $\N_p(f)$ based on $\I(f)$? \citet{gu2025tight} conjectured that $\N_p(f) = \Omega(\I(f) \log \I(f))$, and proved it for the special case where function $f$ has near-maximum $\Theta(n)$ total influence $\I(f)$. In this work, we resolve their conjecture in the affirmative.

\begin{theorem} \label{thm:main}
For every constant $p \in (0, 1/2)$, there exists a constant $c=c(p)>0$ such that $\N_p(f) \ge c \I(f) \log \I(f)$ for every Boolean function $f$.
\end{theorem}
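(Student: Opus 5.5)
The plan is to prove a general lower bound for noisy decision trees against a hard input distribution, and then specialize it so that it yields $\I(f)\log\I(f)$. Let $E$ be the set of sensitive edges of the hypercube, i.e.\ pairs $\{x,x^{\oplus i}\}$ with $f(x)\ne f(x^{\oplus i})$. Then $|E| = 2^{n-1}\I(f)$, and the average degree of the subgraph $G=(\{0,1\}^n,E)$ is $\I(f)$.

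The first step is to find a large subgraph of $G$ in which every vertex has high degree. Repeatedly deleting vertices of degree below $\I(f)/2$ removes fewer than $2^{n-1}\I(f)$ edges, so a nonempty subgraph $H$ remains with minimum degree at least $\I(f)/2$. Call its vertex set $S$ and set $d=\I(f)/2$. As the hard distribution, take $x$ uniform on $S$ (or weighted by degree). For such an $x$, at least $d$ coordinates $i$ have the property that flipping $x_i$ changes $f$.

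The second step is the core information-theoretic argument, in the style of Feige et al.'s \Parity{} lower bound and of the argument of Gu, Li and Xu for high-influence functions. Suppose an algorithm makes $T \le c\,d\log d$ queries in expectation. For each input $x\in S$, call a coordinate $i$ \emph{lightly queried} if it is queried fewer than $\tfrac{1}{10}\log d/\log\frac{1-p}{p}$ times. By Markov's inequality, at most a small fraction of the $d$ sensitive coordinates of $x$ are heavily queried, so most edges of $H$ at $x$ are lightly queried. If coordinate $i$ is queried $k$ times, the likelihood ratio between the transcript distributions on $x$ and on $x^{\oplus i}$ is at most $((1-p)/p)^{k}$, which here is at most $d^{1/10}$. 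I would then run a coupling or change-of-measure argument. Form the mixture $Q_x = \frac{1}{|N_H(x)|}\sum_{y\in N_H(x)} P_y$ over the neighbours $y$ of $x$ in $H$. The goal is to show that $\Pr_{P_x}[\text{output } f(x)]$ and $\Pr_{Q_x}[\text{output } f(x)]$ cannot both be at least $2/3$ when averaged over $x$. The mechanism is the standard one: the noise on a single lightly queried bit leaves a ``reasonable doubt'' of probability $\ge d^{-1/10}$ that the bit is flipped, and with $\Omega(d)$ such bits the doubts accumulate, in the same way that a second-moment or union argument in reverse gives the \Parity{} lower bound.

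Concretely, I would prove the following. With $L(x)$ the set of lightly queried sensitive coordinates of $x$, we have $\sum_{i\in L(x)} \Pr[\text{transcript under } x^{\oplus i}\text{ looks like } x] \gtrsim |L(x)|\, d^{-1/10}\gg 1$. Summing the error probability over the edges of $H$ and using double counting (each edge is counted from both endpoints, and $H$ has minimum degree $d$) then forces the average error above $1/3$.

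The main obstacle is adaptivity. Which coordinates are lightly queried depends on the random transcript, and the switching argument must be carried out simultaneously along a path of the decision tree. My first attempt would be to simulate $x^{\oplus i}$ from $x$ by resampling the noisy answers to coordinate $i$, using a coupling that succeeds with probability $p^{k}/(1-p)^{k}$ conditioned on the first $k$ answers. Adaptivity then enters only through a stopping time, which the optional-stopping form of the likelihood ratio handles. If a direct count loses too much, I would fall back on a more robust formulation. Define the degree statistic $\sum_{x}\sum_{y\sim_H x}\min(1,\ldots)$, and prove a general theorem of the form $\N_p(f)\ge c\cdot$(that statistic), bounding it from below by $d\log d$ using the minimum-degree subgraph from the first step.
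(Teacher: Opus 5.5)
Your first step is fine and coincides with the paper's reduction. Peeling $H_f$ down to an induced subgraph of minimum degree at least $\I(f)/2$ is exactly the observation $\I(f)\le 2\degen(f)$. The paper then applies its main technical theorem with $d_1=d_2=\degen(f)$.

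The gap is in your second step, which is where all the difficulty lies. Your inequality $\sum_{i\in L(x)}P_{x^{\oplus i}}(\tau)\ge |L(x)|\,d^{-1/10}P_x(\tau)$ is true. But the double counting does not turn it into constant error. A transcript on which a neighbour $y$ errs is charged once from each of its $\deg_H(y)\ge d$ neighbours. After exchanging sums you only obtain roughly $\bE_{y}[\deg_H(y)\,P_y(\mathrm{err})]\gtrsim d^{9/10}\Pr(\mathrm{correct})$, i.e.\ an error lower bound of order only $d^{9/10}/\Delta(H)\le d^{-1/10}$. Making it constant forces ``lightly queried'' to mean $O(1)$ queries, which gives only $\Omega(d)$. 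Note that $\sum_i P_{x^{\oplus i}}(\tau)$ is a sum over \emph{different} distributions, so its exceeding $1$ is no contradiction.

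More fundamentally, per-edge likelihood-ratio bounds cannot by themselves yield balance. Take a $d$-regular bipartite piece with mass $a$ on each vertex of one side and $a\,d^{-1/20}$ on each vertex of the other. Every edge ratio lies in $[d^{-1/10},d^{1/10}]$, and every vertex is outweighed by its neighbours by a factor at least $d^{19/20}$. Yet the second side has total posterior mass only about $d^{-1/20}$.

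What is missing is a mechanism for two-sided control of the posterior, which the paper supplies in two pieces.

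\textbf{Adaptivity.} The paper handles it not by a coupling or stopping-time argument but by the three-phase reduction. First, $c_1\log d_1$ nonadaptive noisy queries are made per bit, followed by a free revelation step. This makes the posterior product-form: each $x_i=1$ is revealed with probability $p_+=1-d_1^{-c_3\pm o(1)}$, and each $x_i=0$ with probability $p_-=o(1)$. Only $c_2d_2$ exact queries remain, and their effect is controlled by the martingale property of $\mu_{z,t}$.

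\textbf{Balance lemma.} The paper works on the bipartite graph between $f^{-1}(0)$ and $f^{-1}(1)$ with edge weights $\sqrt{\mu_z(x)\mu_z(y)/(\deg_G(x)\deg_G(y))}$. It needs a lower bound on the average weighted degree, and also an \emph{upper} bound on the weighted degree for all but a negligible part of the posterior mass. The upper bound is the delicate part. Each surviving down-neighbour contributes a factor $\sqrt{(1-p_-)/(1-p_+)}\approx d_1^{c_3/2}$. So one needs a Chernoff bound showing that only about a $(1-p_+)$ fraction of the down-neighbours survive in the support. This is exactly where the minimum degree is used, and it relies on the independence that the product-form posterior provides.

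Your fallback ``degree statistic'' theorem is left unspecified. As written, the proposal does not reach a constant-error conclusion.
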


This is the first lower bound on the noisy query complexity that is general enough to apply to an arbitrary function (besides the trivial $\Omega(\R(f))$ lower bound). Previously, lower bounds are often studied for specific functions \citep{feige1994computing},  restricted classes of functions such as random functions~\citep{reischuk1991reliable,feige1992complexity,evans1998average} and functions with high total influence \citep{gu2025tight}, or restricted algorithms such as non-adaptive algorithms \citep{reischuk1991reliable}. This is also the best lower bound of $\N_p(f)$ based purely on $\I(f)$ one could hope for, because for any value of $v \in [n]$, there are functions where $\I(f) = v$ and $\N_p(f) = \Theta(\I(f) \log \I(f))$ (e.g., \Parity{} on a subset of $v$ indices).

\subsection{Main Technical Theorem and Other Applications}

\Cref{thm:main} is a consequence of an even stronger result which we will explain next. Consider the Boolean hypercube $H$. For every edge $(x, y) \in E(H)$, we keep it if $f(x) \ne f(y)$ and remove it otherwise. Let us call the remaining graph $H_f$. Our main technical theorem that leads to \Cref{thm:main} is the following: 

\begin{restatable}{theorem}{MainTHM}
\label{thm:main-general}
  For every constant $p\in (0, 1/2)$ and $\gamma \in (0, 1]$, there exists $c=c(p,\gamma)>0$ such that the following holds.
  Let $f: \{0, 1\}^n \to \{0, 1\}$ be a Boolean function and 
  let $G$ be an induced subgraph of $H_f$ with a nonempty edge set.
  Let $d_1 = \min_{x\in V(G)} \deg_G(x)$. 
  Let $\mu$ be the distribution on $V(G)$ defined as
  $ \mu(x) = \frac{\deg_G(x)}{2 |E(G)|} $
  and let $d_2 \ge d_1$ be any number satisfying $\bP_{x\sim \mu}(\deg_G(x) \ge d_2) \ge \gamma$.
  Then $\N_p(f) \ge c \cdot d_2 \log (1+d_1)$.
\end{restatable}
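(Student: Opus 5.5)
We argue via a hard distribution and an information-type argument, using Yao's principle. The hard distribution draws $x\sim\mu$ on $V(G)$. Any algorithm computing $f$ with error $\le 1/3$ on every input must, on average over $x\sim\mu$, distinguish $x$ from its $G$-neighbours: every neighbour $y$ of $x$ in $G$ has $f(y)\ne f(x)$. Moreover, $\mu$ is the stationary distribution of the random walk on $G$, so the pair $(x,y)$ with $x\sim\mu$ and $y$ a uniformly random $G$-neighbour of $x$ is a uniformly random oriented edge of $G$, and it is symmetric. Fix a deterministic noisy-query algorithm $\cA$ with expected query budget $T$ under $\mu$. Let $P_x$ be the transcript distribution on input $x$. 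For an edge $(x,y)$ in direction $i$, the two transcript distributions differ only through the answers to queries of coordinate $i$. Since $f(x)\ne f(y)$, on average over edges we need $\mathrm{TV}(P_x,P_y)\ge 1/3$.

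The core is a per-vertex counting bound. For a vertex $x$ of degree $\deg_G(x)$, let $T_i(x)$ be the expected number of queries to coordinate $i$ on input $x$. To make $P_x$ far from $P_y$ for all but an $\epsilon$ fraction of neighbours $y=x\oplus e_i$ in $G$, the algorithm needs $T_i(x)\gtrsim \log(1/\delta)$ on most of these directions, where $\delta$ is the allowed confusion probability. A naive per-edge bound only gives $T_i(x)\ge\Omega(1)$, so the logarithm must be extracted as follows. Consider the posterior after running $\cA$ on $x$, restricted to the "star" $\{x\}\cup N_G(x)$. The likelihood ratio $P_y/P_x$ is a product over the queries to coordinate $i$ of factors $((1-p)/p)^{\pm1}$. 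If only $o(\log \deg)$ queries go to coordinate $i$, then $P_y(\text{transcript})\ge \deg^{-1/2}P_x(\text{transcript})$ with good probability. If this happens for more than $\sqrt{\deg}$ neighbours, then under the prior that is uniform on the star, $x$ cannot be singled out, and the algorithm errs with constant probability. Since $\mu$ weights vertices by degree, the local prior on a star matches the global one up to degree ratios. Carrying this out yields, for a constant fraction of $\mu$-mass vertices $x$, the bound $\sum_i T_i(x)\ge c\,\deg_G(x)\log(1+d_1)$. Here $\log(1+d_1)$ appears because every neighbour $y$ also has degree $\ge d_1$, which keeps the required confusion set large from the perspective of both endpoints.

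To finish, the error guarantee holds on at least a $1-O(\epsilon)$ fraction of $\mu$-mass via Markov's inequality. Intersecting with the event $\deg_G(x)\ge d_2$, which has mass $\ge\gamma$, gives $\bE_{x\sim\mu}[\text{queries}]\ge c\gamma\, d_2\log(1+d_1)$. Then $\N_p(f)\ge$ this quantity, since the worst-case expected number of queries dominates the $\mu$-average.

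The main obstacle is the star argument, because queries are adaptive. Which coordinates get queried depends on earlier noisy answers, so $T_i$ is not fixed. The approach I would try first is a change-of-measure or martingale argument. Track the log-likelihood ratios $L_i=\log(P_y/P_x)$ simultaneously for all $i\in N_G(x)$, and show the following: with $T$ total queries spread over $\deg$ coordinates, at least half of the coordinates receive at most $2T/\deg$ queries with constant probability. On those coordinates $L_i\ge -O(T/\deg)$, so if $T\ll \deg\log d_1$, many neighbours retain posterior mass comparable to $x$. Rather than conditioning on a fixed set of coordinates, I would use a potential function: the expected posterior mass that the star assigns to the true $x$. I would bound its growth per query by a factor of $O(1)$ on the relevant coordinate, mirroring the proof of the $\Omega(n\log n)$ bound for PARITY in \citet{feige1994computing} and its generalization in \citet{gu2025tight}.
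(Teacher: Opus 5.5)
\textbf{There is a genuine gap in the step you call the core.} From ``more than $\sqrt{\deg}$ neighbours $y$ have $P_y(\tau)\ge \deg^{-1/2}P_x(\tau)$'' you conclude that $x$ cannot be singled out under the uniform star prior, and hence that the algorithm errs with constant probability. That inference is false. Under the uniform prior on $\{x\}\cup N_G(x)$, the centre has mass only $1/(\deg+1)$, and every neighbour has the same value $1-f(x)$. So $\sum_y P_y(\tau)\ge P_x(\tau)$ only says the centre's star-posterior is at most $1/2$. Correctness never requires more than that, and error $\le 1/3$ on every input of the star remains possible.

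A concrete counterexample is OR at $x=0^n$, where $\deg_{H_f}(0^n)=n$. Consider the $O(n)$ algorithm that runs, for each coordinate, a random-walk test stopping at a net surplus of $k$ zero-answers or $K\log n$ one-answers. On typical transcripts under $0^n$, every coordinate stops at the zero side, so every ratio $P_{e_i}/P_{0^n}$ equals $(p/(1-p))^k$, a constant far above $n^{-1/2}$. Yet the algorithm is correct everywhere with $O(n)$ expected queries. Your mechanism would instead force $\Omega(n\log n)$ queries on $0^n$.

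Relatedly, nothing in your star argument looks at the neighbours' degrees. So replacing $\log \deg_G(x)$ by $\log(1+d_1)$ is asserted, not derived. Your proposed potential (the star-posterior mass of the true $x$) has the same defect, since driving it to a constant is not needed for correctness. A single-edge argument only gives $\mathrm{TV}(P_x,P_y)\ge 1/3$, i.e.\ $\Omega(1)$ queries per direction with no logarithm.

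\textbf{What is missing.} You need (i) a way to tame adaptivity, and (ii) a prior under which both $f$-classes keep constant posterior mass, together with a step where $d_1$ genuinely matters.

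For (i), the paper uses the three-phase reduction. With $c_1c_2d_2\log d_1$ expected noisy queries, only $c_2d_2$ bits in expectation receive more than $m_1=c_1\log d_1$ queries. So the algorithm can be simulated by $m_1$ nonadaptive noisy queries per bit, then a free reveal, then $c_2 d_2$ exact adaptive queries. After the reveal, a $1$-bit stays hidden with probability $1-p_+=d_1^{-c_3\pm o(1)}$ and a $0$-bit with probability $1-o(1)$.

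For (ii), the paper proves a global Balance Lemma under the degree-weighted prior $\mu$, with symmetric edge weights $w_z(x,y)=\sqrt{\mu_z(x)\mu_z(y)/(\deg_G(x)\deg_G(y))}$. The two sides of the bipartite graph carry equal weighted mass. Hence an upper bound $O(\sqrt{1-p_+})$ on weighted degree for all but a negligible part of the posterior, plus a lower bound $\Omega(\sqrt{1-p_+})$ on the average weighted degree, forces $\min\{\mu_z(f^{-1}(0)),\mu_z(f^{-1}(1))\}=\Omega(1)$.

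The upper bound is exactly where $d_1$ enters. The count $|N_-(x)\cap\cX_z|\sim\Bin(|N_-(x)|,1-p_+)$ concentrates at $O(\deg_G(x)(1-p_+))$ with failure probability $\exp(-d_1^{1-c_3\pm o(1)})$, because $\deg_G(x)\ge d_1$ and $c_3<1$. The lower bound comes from the $\gamma$-mass set of vertices with at least $d_2/2$ up-neighbours. This set survives Phase 3 because $\mu_{z,t}$ is a martingale and each exact query removes at most one surviving up-neighbour.

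Without substitutes for these pieces, your per-vertex bound $\sum_i T_i(x)\ge c\deg_G(x)\log(1+d_1)$ has no proof.
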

Note that the distribution $\mu$ is equivalent to the following distribution: We first sample a random edge from $E(G)$, and then sample a random endpoint of the edge. Using this interpretation, our theorem works for the case where $G$ has a constant fraction of edges that have one endpoint of degree at least $d_2$.

To see how \Cref{thm:main-general} implies \Cref{thm:main}, we consider the degeneracy of a Boolean function $f$, which may be a parameter of independent interest (even though degeneracy is a well-studied concept in graph theory, the degeneracy of $H_f$ has not been studied to the best of our knowledge). The degeneracy of a graph is the smallest integer $k$ such that every subgraph of the graph has at least one vertex with degree at most $k$. We use $\degen(f)$ to denote the degeneracy of $H_f$. By definition of $\degen(f)$, there exists an induced subgraph $G$ of $H_f$ where the minimum degree of $G$ is $\degen(f)$. Hence, we can apply \Cref{thm:main-general} with $d_1 = d_2 = \degen(f)$ to obtain the following result:

\begin{theorem} \label{thm:degen}
For every constant $p \in (0, 1/2)$, there exists a constant $c=c(p)>0$ such that $\N_p(f) \ge c \degen(f) \log \degen(f)$ for every Boolean function $f$.
\end{theorem}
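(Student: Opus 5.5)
The plan is to derive \Cref{thm:degen} directly from \Cref{thm:main-general} with $\gamma = 1$, after choosing a suitable subgraph $G$ of $H_f$.

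\textbf{Trivial cases.} If $\degen(f) \le 1$, then $\degen(f)\log\degen(f) \le 0$ under our conventions, and there is nothing to prove. So assume $k \defeq \degen(f) \ge 2$.

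\textbf{Choosing $G$.} By the definition of degeneracy, $H_f$ is not $(k-1)$-degenerate. Hence some subgraph $G'$ of $H_f$ has all degrees at least $k$. Since $G'$ has at least one vertex, and that vertex has degree at least $k \ge 2$, $G'$ has a nonempty edge set.

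We want an induced subgraph, as \Cref{thm:main-general} requires. Let $G = H_f[V(G')]$ be the subgraph of $H_f$ induced on $V(G')$. Adding edges only raises degrees, so $\deg_G(x) \ge \deg_{G'}(x) \ge k$ for every $x \in V(G)$. The edge set of $G$ is also nonempty.

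\textbf{Applying \Cref{thm:main-general}.} Set $d_1 = \min_{x} \deg_G(x) \ge k$ and $d_2 = d_1$. Every vertex has degree at least $d_2$, so
\[
\bP_{x\sim\mu}\bigl(\deg_G(x)\ge d_2\bigr) = 1 \ge \gamma
\quad\text{for } \gamma = 1.
\]
Thus all hypotheses hold, and we obtain
\[
\N_p(f) \;\ge\; c(p,1)\, d_1 \log(1+d_1) \;\ge\; c(p,1)\, k\log k,
\]
using that $t\log(1+t)$ is increasing in $t$ and $\log(1+k) \ge \log k$. We may then take $c(p) = c(p,1)$.

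\textbf{Main obstacle.} There is essentially none; the only subtlety is the step above. The degeneracy definition gives an arbitrary (not necessarily induced) subgraph of minimum degree at least $k$, and passing to the induced subgraph on the same vertex set fixes this without lowering the minimum degree.
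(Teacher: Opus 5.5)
Your proposal is correct and matches the paper's argument: the paper also takes an induced subgraph of $H_f$ with minimum degree $\degen(f)$ and applies \Cref{thm:main-general} with $d_1 = d_2 = \degen(f)$, so $\gamma = 1$ works. You additionally spell out two details the paper leaves implicit: passing from an arbitrary subgraph to the induced one without lowering degrees, and handling the trivial cases $\degen(f) \le 1$.
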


Note that the average degree of $H_f$ is exactly $\I(f)$. A basic fact in graph theory is that the average degree of a graph is at most twice its degeneracy (see e.g., \cite{lick1970k}). Applying this basic fact to $H_f$, we get that $\I(f) \le 2\degen(f)$. Therefore, \Cref{thm:degen} implies \Cref{thm:main}.

We also remark that $\degen(f)$ is related to sensitivity. More precisely, it can be seen that $\degen(f)=\max_{\emptyset \subsetneq V \subseteq \{0, 1\}^n}\min_{x \in V}s(f_V, x)$, where $f_V$ is the partial function which agrees with $f$ on $V$ but undefined everywhere else.

We also apply \Cref{thm:main-general} to obtain several new lower bounds for more specific functions. The first example is biased random function, which extends the previous lower bound on unbiased random functions  \citep{reischuk1991reliable,feige1992complexity,evans1998average} to a more general setting (in the following, a biased random function means that for each input $x \in \{0,1\}^n$, the value $f(x)$ is sampled independently, with $\Pr[f(x)=1]=q$, where $q$ need not equal $1/2$): 

\begin{restatable}{theorem}{Biased}
\label{thm:biased}
    Let $\lambda \in (0, 1), p\in (0, 1/2), \gamma > 0$ be fixed constants. 
    Let $f: \{0, 1\}^n \rightarrow \{0, 1\}$ be a biased random function where for each $x \in \{0, 1\}^n$, $f(x) = 1$ independently with probability $q = \gamma n^{-\lambda}$. Then $\N_p(f) \ge \Omega(n \log n)$ with probability $1 - \exp(-c n)$ for some constant $c > 0$ over the randomness of $f$.  
\end{restatable}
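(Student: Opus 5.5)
Apply \Cref{thm:main-general} to a suitable induced subgraph $G$ of $H_f$. Let $S = f^{-1}(1)$; typically $|S| \approx q 2^n$. Every $x \in S$ has about $n - qn \approx n$ neighbors in $f^{-1}(0)$. Every $y \in f^{-1}(0)$ has about $\Bin(n,q)$ neighbors in $S$, with mean $\gamma n^{1-\lambda}$. Take $G$ to be the subgraph of $H_f$ induced on $S \cup N(S)$, restricted so that degrees are controlled. In $G$, ones have degree about $n$ and zeros have degree about $\gamma n^{1-\lambda}$. Since $G$ is bipartite, half of the $\mu$-mass lies on $S$, so we may take $\gamma = 1/2$ in the theorem and $d_2 = \Omega(n)$. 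Then $d_1 \ge \Omega(n^{1-\lambda})$, so $\log(1+d_1) = \Omega((1-\lambda)\log n) = \Omega(\log n)$. This yields $\N_p(f) = \Omega(n\log n)$.

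The key steps are as follows. First, define $V(G) = S' \cup Z'$, where $S'$ is the set of $x\in S$ with $\deg_{H_f}(x) \ge n/2$. Let $Z'$ be the set of zeros $y$ adjacent to $S'$ with at least $\frac{\gamma}{2} n^{1-\lambda}$ neighbors in $S'$. Vertices of $S'$ that lose too many neighbors must then be pruned iteratively. A cleaner choice is to take $G$ to be the $k$-core of the bipartite graph between $S$ and $f^{-1}(0)$ (the edges of $H_f$), with $k = \frac{\gamma}{4}n^{1-\lambda}$. Then $d_1 \ge k$ automatically. What remains is to show two things: the core is nonempty, and a constant $\mu$-fraction of its vertices has degree $\Omega(n)$.

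For the degree condition, note that in a bipartite graph $\sum_{x\in S\cap V(G)}\deg_G(x) = |E(G)|$, so $\mu(S\cap V(G)) = 1/2$. It therefore suffices to show that most edges of $G$ incident to $S$ go to vertices of degree $\ge n/4$. Equivalently, the core-peeling removes only a small fraction of the edges of the full bipartite graph $B$, which has about $q2^n n$ edges. For this we need a concentration argument. The one-vertices are independent Bernoulli($q$), and a zero $y$ has $\Bin(n,q)$ one-neighbors. By a Chernoff bound, $\Pr[\Bin(n,q) < \mu_0/4] \le \exp(-c n^{1-\lambda})$, so the fraction of "bad" zeros is tiny. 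Each one-vertex has $n$ potential neighbors, and the expected number of bad ones among them is $n\exp(-cn^{1-\lambda}) = o(1)$. The peeling cascade should then remove an $o(1)$ fraction of the edges.

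The main obstacle is controlling the cascade rigorously and obtaining failure probability $\exp(-cn)$ over $f$, as opposed to a bound on expectation. My plan is to avoid peeling altogether. I would bound the total number of edges incident to low-degree vertices in $B$, namely ones with degree $< n/2$ and zeros with degree $< k$. This count is a sum of locally dependent indicators over $2^n$ vertices, each depending on $O(n)$ coins. Its expectation is $2^n\exp(-\Omega(n^{1-\lambda}))\cdot n$, which is far less than $|E(B)| \approx q n 2^n$. Concentration with failure probability $\exp(-cn)$ would follow from a bounded-differences or Janson-type inequality on this sum of $2^n$ terms. Each coin affects only $O(n)$ indicators, so the deviation bound is roughly $\exp(-\Omega(2^n/\mathrm{poly}(n)))$, which is far stronger than needed.

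Given few low-degree edges, the peeling removes at most a constant multiple of them. To see this, charge each removed vertex to the deficit it had created: a vertex drops below threshold only after losing about half its degree to previously removed vertices. This gives $|E(G)| \ge (1-o(1))|E(B)|$. Consequently most ones keep degree $\ge n/4$, which gives the required $d_2$ with constant $\mu$-mass. Finally, \Cref{thm:main-general} with these $d_1, d_2$ completes the proof.
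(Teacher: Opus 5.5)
Your architecture matches the paper's. \Cref{lem:many-high-deg-vertices} peels $H_f$ to its $k$-core with $k=\Theta(n^{1-\lambda})$. It then shows that ones of degree $\Omega(n)$ keep constant $\mu$-mass, and applies \Cref{thm:main-general} with $d_1=\Theta(n^{1-\lambda})$ and $d_2=\Theta(n)$.

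You diverge at the step you call the main obstacle, and there your justification fails as written. The principle ``a vertex drops below threshold only after losing about half its degree'' does not bound the cascade by a constant multiple of the low-degree edges. A removed non-low vertex of initial degree $d$ receives more than $d-k$ deletions and passes on fewer than $k$. So it absorbs a net amount proportional to $d$ only when $d\ge(2+\epsilon)k$. Degree-$2k$ vertices in layers, each with $k+1$ edges back and $k-1$ forward, lose just over half their degree, yet they propagate the deficit with amplification $\Theta(k)$.

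Your own thresholds hit this. You call a zero low only if its degree is $<k$. So a zero of degree just above $k$ counts as non-low, falls after losing one edge, and then sheds about $k$ more. This is repairable: call zeros of degree $<3k$ low (still an $\exp(-\Omega(n^{1-\lambda}))$ fraction, since their mean degree is $qn=4k$) and do the absorption accounting. But that is not the argument you gave.

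More importantly, the cascade needs no control at all, and this is the observation the paper's lemma rests on. Each vertex deleted during peeling has current degree $<k$, so it takes fewer than $k$ edges with it. Hence peeling deletes fewer than $k\cdot 2^n$ edges in total, deterministically. With your $k=\frac{\gamma}{4}n^{1-\lambda}$ this is $\frac14 qn2^n$. Meanwhile $|E(H_f)|\ge(1-o(1))qn2^n$ and $|f^{-1}(1)|\le(1+o(1))q2^n$ with overwhelming probability. So the core keeps at least $(\frac34-o(1))qn2^n$ edges. Ones of core-degree $<n/4$ account for at most $(\frac14+o(1))qn2^n$ of these. Therefore ones of core-degree $\ge n/4$ have $\mu$-mass at least $\frac14-o(1)$. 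You need neither $(1-o(1))$ edge survival nor any cascade analysis.

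The paper uses even cruder inputs: Chernoff on $|f^{-1}(1)|$, and Markov on the number of vertices with more than $\frac23 n$ one-neighbours, which is where its $\exp(-cn)$ failure probability comes from. It compensates by taking the peeling threshold a smaller constant times $n^{1-\lambda}$. Your bounded-differences estimate is valid and gives a far smaller failure probability, but it is not needed.
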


Our second example considers the complexity of counting the number of any fixed pattern $t = t_1 t_2 \cdots t_k$ in the bit sequence $x_1 x_2  \cdots x_n$ as a subsequence. This is a vast generalization of the problem on counting the number of $1$'s in the sequence $x_1 x_2  \cdots x_n$, which was known to require $\Omega(n \log n)$ noisy queries \citep{feige1994computing}. Note that this subsequence counting problem does not have Boolean output, but our techniques can still apply to such functions, via reduction from a threshold version of the function which has Boolean output. Also, $\N_p(f)$ for non-Boolean functions $f$ can be defined analogously. 

\begin{restatable}{theorem}{Subseq}
\label{thm:subseq}
    Let $t \in \{0, 1\}^k$ be a fixed pattern for some $k = O(1)$. Let $f: \{0, 1\}^n \rightarrow \bZ$ be a function where $f(x)$ denotes the number of times $t$ appears in $x_1, \ldots, x_n$ as a (not necessarily contiguous) subsequence. Let $p \in (0, 1/2)$ be a constant. Then $\N_p(f) = \Omega(n \log n)$. 
\end{restatable}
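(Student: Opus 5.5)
We reduce to a Boolean threshold function and then apply \Cref{thm:main-general} with $d_1 = d_2 = \Theta(n)$ and $\gamma = 1$.

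\textbf{Reduction.} Choose a threshold $T$ and let $g(x) = \mathbb{1}[f(x) \ge T]$. Any algorithm computing $f$ also computes $g$, so $\N_p(f) \ge \N_p(g)$. It therefore suffices to find an induced subgraph $G$ of $H_g$ in which every vertex has degree $\Omega(n)$. Then \Cref{thm:main-general} gives $\N_p(g) \ge c\, n \log n$.

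\textbf{Construction of $G$.} Use a structured family of inputs. Split $[n]$ into $k$ consecutive blocks $B_1, \ldots, B_k$ of length about $n/k$, ignoring leftover positions. Block $B_j$ is filled with $t_j$ except on a designated subset $S_j \subseteq B_j$ of size about $n/(2k)$, whose bits are free. With exactly $m_j$ free positions set to the complement $\bar t_j$, the count $f(x)$ is a polynomial in $(m_1, \ldots, m_k)$, since the block structure makes the count depend only on these numbers. To first order it equals $\prod_j (|B_j| - m_j)$ plus cross terms from occurrences that use several positions in one block.

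This is messy, so the simpler choice is to vary only one block. Fix the blocks $B_j$ for $j \ge 2$ entirely as $t_j$, and let block $B_1$ contain a free region $S$ of size $\Theta(n)$. Then $f(x) = A \cdot (\#\{i \in S : x_i = t_1\}) + C$ plus corrections. Here $A = \Theta(n^{k-1})$ is the number of ways to complete the occurrence in the later blocks. The corrections come from occurrences using two or more positions inside $B_1$, and these depend on the arrangement within $S$, not just the weight.

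To kill those corrections, place the free region so that its positions cannot combine with each other. Let the free positions be single positions interleaved with long fixed runs. Alternatively, accept that $f$ restricted to this family is strictly monotone in each free bit, and use a level of the poset. Flipping any free bit from $\bar t_1$ to $t_1$ strictly increases $f$: every occurrence survives, and at least the $A$ occurrences using that bit as $t_1$ are new. Injectivity is not needed for this.

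\textbf{Choosing $G$ (main obstacle).} The difficulty is to get a set of vertices where $g$ flips on $\Omega(n)$ coordinates simultaneously. $f$ is monotone in the free bits but not a function of the weight alone, so a single threshold does not cut a whole Hamming level. The plan is to bound the variation instead: show that the dependence on arrangement is $O(n^{k-2}\cdot n) = O(n^{k-1})$ with a small constant, achieved by making the free region a small constant fraction $\epsilon n$. The weight term changes by $A = \Theta(n^{k-1})$ per unit, with a large constant relative to that. Then $f$ is determined by the weight up to a window of $O(1)$ weight levels.

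To finish, choose a weight $w$ and take $V(G)$ as the vertices of weight $w$ on which $g = 0$ together with the vertices of weight $w+1$ on which $g = 1$. This needs care; alternatively, take $G$ as the induced subgraph on the two levels nearest the threshold, restricted to the vertices where the threshold is crossed. Averaging over $T$ shows that some $T$ gives a subgraph with $\Omega(n)$ minimum degree after pruning low-degree vertices. Pruning preserves a nonempty subgraph of minimum degree at least half the average degree, by the degeneracy fact. Applying \Cref{thm:main-general} with $d_1 = d_2 = \Omega(n)$ and $\gamma = 1$ then gives $\N_p(f) = \Omega(n \log n)$.
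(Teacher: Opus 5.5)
There is a genuine gap. Your choice of $G$ rests on the claim that, with a free region of size $\epsilon n$, $f$ is determined by the weight up to $O(1)$ weight levels. That claim is false, and the closing sentence about averaging over $T$ is asserted rather than argued.

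\textbf{The weight-window claim fails.} Occurrences that use two or more free positions can number up to order $|S|^2n^{k-2}$, and this count depends on the arrangement, not just the weight. Interleaving the free positions with fixed runs does not prevent this, since occurrences are non-contiguous. Concretely, take $t=01$, $B_1=0^L$ with the free block $S$ ($|S|=s=\epsilon L$) at its start, and $B_2=1^L$. Then $f=\mathrm{const}+Lw+U$, where $w$ is the number of free $0$'s and $U$ is the number of pairs (free $0$, later free $1$). At $w=s/2$, $U$ ranges from $0$ to $s^2/4$. So a single weight class spreads over $\Theta(\epsilon^2 n)$ unit steps, and even a typical arrangement fluctuates by $\Theta(s^{3/2})$, i.e.\ $\Theta(\sqrt n)$ steps. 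Your two-level subgraph therefore has no degree guarantee.

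\textbf{The monotonicity justification is also wrong.} ``Every occurrence survives'' fails: while the flipped bit equals $\bar t_1$, it can play a later symbol $t_j=\bar t_1$ after some $t_1$, and those occurrences are destroyed. In the same example with $S$ at the \emph{end} of $B_1$, the coefficient of $w$ drops to $s$. Flipping the last free bit from $1$ to $0$, when all other free bits are $0$, then raises $f$ by exactly $1$. So your construction has no uniform increment of order $n^{k-1}$, which any threshold argument needs.

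\textbf{What is actually needed.} Two properties suffice:
(a) every upward flip of a free bit raises $f$ by at least some $a=\Theta(n^{k-1})$;
(b) every flip changes $f$ by at most $Ka$ with $K=O(1)$.
Property (b) is automatic, since a bit lies in $O(n^{k-1})$ occurrences. Property (a) must be engineered. The paper puts the $\ell$ free bits last with $t_k=1$ and stretches the $0$-blocks by a factor $C>2k^{k-1}$. Then a free $1$ gains at least $C^{k-|t|}\ell^{k-1}$ occurrences, while a free $0$ loses at most $k^{k-1}C^{k-|t|-1}\ell^{k-1}$. In your layout, it would also work to place $S$ at the very start of $B_1$, pad the front with $\bar t_1$, and take $\epsilon$ small in terms of $k$.

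\textbf{How the paper finishes.} It avoids weight structure entirely. Azuma--Hoeffding gives an interval of length $c_1\ell^{k-1}$ with $\Omega(2^\ell/\sqrt\ell)$ preimages. Thresholding at the top of this interval gives every such point with $\Omega(\ell)$ zeros $\Omega(\ell)$ cut edges. By (b) and the maximality of the interval, there are only $O(\ell|S|)$ cut edges in total. \Cref{lem:many-high-deg-vertices} with $\lambda=1/2$ then applies, giving $d_1=\Theta(\sqrt\ell)$ and $d_2=\Theta(\ell)$.

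\textbf{Making your averaging over $T$ rigorous.} Your idea can be carried out from (a) and (b) alone by double counting. Let $E_T$ be the cut edges of the $m$-dimensional free subcube and $V_T$ their endpoints. Each edge is cut by at least $a$ integer thresholds, and each vertex lies in $V_T$ for at most $2Ka+1$ thresholds. Hence $\sum_T|E_T|\ge a\,m2^{m-1}$ and $\sum_T|V_T|\le(2Ka+1)2^m$. So for some $T$, the induced subgraph $H_{g_T}[V_T]$, whose edge set is exactly $E_T$, has average degree $\Omega(m/K)$. Pruning then yields $d_1=d_2=\Omega(n)$ with $\gamma=1$. That argument, together with an actual proof of (a), is the missing content of your proposal.
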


Besides showing new lower bounds on $\N_p(f)$, \Cref{thm:main-general} is strong enough to unify \emph{all} previous nontrivial (i.e. $\N_p(f) \gg \R(f)$) asymptotic noisy query lower bounds to the best of our knowledge. Previously, nontrivial lower bounds for $\N_p(f)$ were known for \Parity{}, \Threshold{k}, symmetric functions \citep{feige1994computing}, random functions \citep{reischuk1991reliable,feige1992complexity,evans1998average},  \Conn{} and \stConn{} \citep{gu2025tight}, and functions with $\Omega(n)$ total influence \citep{gu2025tight}. There are also examples whose lower bounds follow by directly reducing from these problems (e.g., $s$-$t$ Shortest Path \citep{gu2025tight}). 
Besides such examples, this is a comprehensive list to the best of our knowledge. Among these problems, functions with $\Omega(n)$ total influence are a special case of \Cref{thm:main}. \Parity{} and random functions have $\Theta(n)$ total influence, so they are also special cases of \Cref{thm:main}. In \Cref{sec:app}, we will explain how \Cref{thm:main-general} implies the lower bounds of noisy query complexity for \Threshold{k}, symmetric functions, \Conn{} and \stConn{}. Because \Cref{thm:main-general} is stronger than previous approaches, the new lower bound proofs based on it are also much simpler than previous proofs in some cases. For instance, to show the lower bound for \Conn{}, \citet{gu2025tight} had to analyze the structure of a uniform spanning tree of a complete graph, which is quite complicated and takes $5$ pages of their proofs. Our proof based on \Cref{thm:main-general} is quite simple and does not rely on \citet{gu2025tight}'s structural results. 

We remark that the lower bound obtained from \Cref{thm:main-general} is not tight in general. The best lower bound one can hope to obtain using \Cref{thm:main-general} is $\Omega(\Delta(H_f) \log \Delta(H_f))$, where $\Delta(H_f)$ denotes the maximum degree of $H_f$. As $\Delta(H_f)$ is exactly the sensitivity of $f$, $\s(f)$, the best lower bound we can hope to obtain is $\Omega(\s(f) \log \s(f))$. Recall $\N_p(f) = \Omega(\R(f))$, so as long as $\R(f) = \omega(\s(f) \log \s(f))$ (see e.g., \citep{rubinstein1995sensitivity} for examples), our lower bound is not tight. There are many known examples with $\R(f) = \omega(\s(f) \log \s(f))$. 

\section{Preliminaries} \label{sec:prelim}
For a simple undirected graph $G$, $V(G)$ denotes the set of vertices and $E(G)$ denotes the set of edges. For $x\in V(G)$, $\deg_G(x)$ denotes the degree of $x$ in $G$.

For $x\in \{0,1\}^n$, $|x|$ denotes the Hamming weight of $x$. For a Boolean function $f: \{0, 1\}^n \rightarrow \{0, 1\}$, $H_f$ denotes the graph whose vertices are $\{0, 1\}^n$ and there is an edge between two vertices $x$ and $y$ if and only if $x$ and $y$ differ at exactly one bit and $f(x) \ne f(y)$. 

$\Bin(n,p)$ denotes the binomial distribution. The Chernoff bound states that for $X\sim \Bin(n,p)$, $\mu=np$, and $\epsilon\ge 0$, we have
\begin{align*}
  \bP(X \ge (1+\epsilon)\mu) &\le \exp\left(-\frac{\epsilon^2}{2+\epsilon} \mu\right), \\
  \bP(X \le (1-\epsilon)\mu) &\le \exp\left(-\frac{\epsilon^2}{2} \mu\right).
\end{align*}
We also use the Azuma-Hoeffding inequality. Let $g: \{0, 1\}^n \rightarrow \bR$ where 
\begin{align*}
\left| g(x_1, \ldots, x_n) - g(y_1, \ldots, y_n)\right| \le B
\end{align*}
for some bound $B$, for every $x$ and $y$ that differ at exactly one bit. Then for every $\lambda > 0$, 
\begin{align*}
    \bP_{x \sim \{0, 1\}^n} \left(\left|g(x) - \bE[g(x)] \right| \ge \lambda B\right) \le 2 \exp(-2 \lambda^2 / n).
\end{align*}

Markov's inequality states that for a non-negative random variable $X$, $\bP(X \ge t \bE[X]) \le 1/t$ for all $t>0$.
We will also use the following corollary:

\begin{corollary}
\label{cor:markov}
Let $X$ be a random variable taking values in $[0, 1]$, then $\bP(X \ge t) \ge \bE[X]-t$ for $0\le t\le \bE[X]$. 
\end{corollary}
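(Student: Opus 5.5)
The corollary follows from splitting the expectation of $X$ at the threshold $t$. Since $X \in [0,1]$, on the event $\{X \ge t\}$ we use the bound $X \le 1$, and on the complementary event $\{X < t\}$ we use $X < t$. This gives
\begin{align*}
\bE[X] = \bE[X \mathbbm{1}\{X \ge t\}] + \bE[X \mathbbm{1}\{X < t\}] \le \bP(X \ge t) + t \cdot \bP(X < t) \le \bP(X \ge t) + t .
\end{align*}
Rearranging yields $\bP(X \ge t) \ge \bE[X] - t$.

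The hypothesis $0 \le t \le \bE[X]$ is only needed for the statement to be meaningful: it makes the right-hand side nonnegative. The inequality itself holds for every $t \ge 0$, and for $t > \bE[X]$ it is vacuous.

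Although the paper labels this a corollary of Markov's inequality, I would prove it directly as above. It can equally be phrased as Markov's inequality applied to the nonnegative variable $1 - X$, which gives
\begin{align*}
\bP(X < t) = \bP(1 - X > 1 - t) \le \frac{1 - \bE[X]}{1 - t}
\end{align*}
for $t < 1$. Hence $\bP(X \ge t) \ge \frac{\bE[X] - t}{1 - t} \ge \bE[X] - t$, where the last step uses $1 - t \in (0,1]$ and $\bE[X] - t \ge 0$. The edge case $t = 1$ is immediate, since then $\bE[X] = 1$ forces $X = 1$ almost surely.

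There is no real obstacle here; the only care needed is the boundary case $t = 1$ in the Markov-based version, which is why I prefer the direct splitting argument.
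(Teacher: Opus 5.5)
Your proof is correct, and your second (Markov-based) argument is essentially the paper's proof: apply Markov's inequality to $1-X$ to get $\bP(X\ge t)\ge \frac{\bE[X]-t}{1-t}$, then relax this to $\bE[X]-t$. Your preferred splitting argument is just the proof of Markov's inequality written out inline, so it is the same route. It has the small advantage of never dividing by $1-t$, so it covers $t=1$ without a separate case, whereas the paper's proof passes over that case silently.
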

\begin{proof}
We have
\begin{align*}
  \bP(X \ge t) = \bP(1-X \le 1-t) \ge 1-\bP(1-X \ge 1-t) \ge 1-\frac{1-\bE[X]}{1-t} \ge \bE[X]-t,
\end{align*}
where the third step is by Markov's inequality.
\end{proof}

\section{Technical Overview} \label{sec:overview}

In this section, we give a high-level overview of our proof for the main technical theorem \Cref{thm:main-general}. \ifdefined\iscolt{The full proof is deferred to \Cref{sec:proof}.} \fi Our proof uses the three-phase framework developed in \citet{feige1992complexity, gu2025tight}. Given an input $x \in \{0, 1\}^n$, the goal of the three-phase problem is also to compute $f(x)$ by querying the input bits $x_i$. In the first phase, the algorithm queries every bit $c_1\log d_1$ times in a nonadaptive way with noisy queries. In the second phase, some information is revealed to the algorithm for free. In the third phase, the algorithm can adaptively make $c_2 d_2$ \emph{exact} queries to compute $f(x)$ (for simplicity, in this overview we assume a worst case $c_2 d_2$ query bound; in the actual proof, we relax it to an expected bound). As shown in \citet{gu2025tight}, if an algorithm cannot solve the three-phase problem with constant error probability, then any algorithm $\cA$ making $c_1 c_2 d_2 \log(d_1)$ noisy queries cannot compute $f(x)$ with constant error probability. Essentially, if there is such an algorithm $\cA$ for computing $f(x)$, then we can simulate it for the three-phase problem. As shown in \citet{gu2025tight}, there is some way to design the second phase, so that the observations made in the first two phases are equivalent to the following:
\begin{itemize}
    \item For every $i$ where $x_i = 1$, $x_i$ is revealed independently with probability $p_+$, where $p_+ = 1 - d_1^{-c_3 \pm o(1)}$ for some constant $c_3$;
    \item For every $i$ where $x_i = 0$, $x_i$ is revealed independently with probability $p_-$, where $p_- = o(1)$. 
\end{itemize}
Let $z \in \{0, 1, \star\}^n$ denote the revealed information up to the second phase, i.e., $z_i = x_i$ if $x_i$ is revealed, and $z_i = \star$ otherwise. Let $\nu_x$ be the distribution of $z$ under $x$.

Previous methods using the three-phase framework \citep{gu2025tight} (or the predecessor two-phase framework in the case of \citet{feige1994computing}) use ad-hoc analyses tailored to the specific functions. We instead provide a unified approach, based on degree statistics of a certain subgraph of the Boolean hypercube. Our approach first differs from \citet{gu2025tight} by defining a distribution $\mu$ that applies to any Boolean function. 
Recall the distribution $\mu$ and a subgraph $G$ of $H_f$ from the statement of \Cref{thm:main-general}, where $\mu(x) = \frac{\deg_G(x)}{2 |E(G)|}$ for $x \in V(G)$. We can extend $\mu$ to $\{0, 1\}^n$, simply by setting $\mu(x) = 0$ for $x \notin V(G)$. This will be the hard distribution for the three-phase problem. 

Our goal is to show that, after the three phases, the probability of $f(x) = 0$ and the probability of $f(x) = 1$ are both likely lower bounded by a constant, where $x$ is distributed with respect to the posterior distribution under all the observations. If this holds, then no matter what value the algorithm outputs for $f(x)$, there is always a constant error probability. Our approach for showing such balance conditions is different from previous approaches \citep{feige1994computing,gu2025tight}. Previously, they were shown either by ad-hoc analyses or, for the case of high-influence functions \citep{gu2025tight}, by the fact that $\I(f) = \Theta(n)$ implies this balance condition. Our approach is more refined and, as we will see shortly, uses a graph-theoretical view of this condition.

As a warm-up, we will first describe why the probability of $f(x) = 0$ and the probability of $f(x) = 1$  are both likely lower bounded by a constant after the first two phases. This is necessary for the lower bound, because otherwise, the algorithm can simply return the value that is more likely, and achieve sub-constant error probability without any queries in the third phase.  

Let $\mu_z$ denote the posterior probability of $x$ under the observation $z$. We need to show that $\min\{\mu_z(f^{-1}(0)), \mu_z(f^{-1}(1))\} \ge \Omega(1)$, with good probability. Let $\cX_z$ denote the support of $\mu_z$. We can imagine a graph, whose vertices are $\cX_z$, and there is an edge between two vertices $x$ and $y$ if $(x, y) \in E(G)$. Each vertex $x$ has a weight, which equals $\mu_z(x)$, and we will also assign each edge $(x, y)$ a weight $w_z(x, y)$. The degree of a vertex $x$ is defined as 
\[
\wdeg_z(x) = \mu_z(x)^{-1} \sum_{y: (x, y) \in E(G)} w_z(x,  y). 
\]
This is a bipartite graph, where one side is $\cX_z \cap f^{-1}(0)$, and the other side is $\cX_z \cap f^{-1}(1)$. If we can show an upper bound $D$ on the maximum degree, and show a lower bound $M$ on the total edge weight, then the total vertex weights of either side of the bipartite graph can be lower bounded by $M / D$, i.e., $\min\{\mu_z(f^{-1}(0)), \mu_z(f^{-1}(1))\} \ge M / D$. The values for $M$ and $D$ we aim for are $M = \Omega(\sqrt{1-p_+})$ and $D = O(\sqrt{1-p_+})$, which can show $\min\{\mu_z(f^{-1}(0)), \mu_z(f^{-1}(1))\} \ge \Omega(1)$ via the previous inequality. More precisely, we will show an upper bound $D$ on the degree of most probability mass of $x$ under $\mu_z$, which is still able to give us a lower bound. 

Towards this goal, we define $w_z(x, y) = \sqrt{\frac{\mu_z(x) \mu_z(y)}{\deg_G(x) \deg_G(y)}}$.
Let $N_+(x)$ denote the set of $y$ where $(x, y) \in E(G)$ and $|y| = |x| + 1$, and let $N_-(x)$ denote the set of $y$ where $(x, y) \in E(G)$ and $|y| = |x| - 1$. By some calculation, one can show the following equality (this is \Cref{eqn:wdeg-nbrs} in \Cref{sec:proof}) that relates the degree of $x$ with the number of vertices in $N_+(x)$ and in $N_-(x)$ that also appears in the support of $\mu_z$. 
\begin{align}
\label{eqn:wdeg-nbrs-overview}
  \wdeg_z(x) = \deg_G(x)^{-1} \left( |N_+(x)\cap \cX_z| \sqrt{\frac{1-p_+}{1-p_-}} + |N_-(x) \cap \cX_z| \sqrt{\frac{1-p_-}{1-p_+}} \right).
\end{align}

\paragraph{Upper bound on degree of ``most'' vertices. } Because $|N_+(x)\cap \cX_z| \le \deg_G(x)$, the contribution of the first term in \Cref{eqn:wdeg-nbrs-overview} is only $\sqrt{\frac{1-p_+}{1-p_-}} = O(\sqrt{1-p_+})$ (recall $p_- = o(1)$). Next, we consider the second term. For $z \sim \nu_x$, for any $y \in N_-(x)$, $y \in \cX_z$ if $z_i = \star$, where $i$ is the index that $x$ and $y$ differ. This happens with probability $1 - p_+$, and is independent for different $y \in N_-(x)$. Thus, we can apply concentration bounds to show that it is likely that
\begin{align*}
|N_-(x) \cap \cX_z| \le O((1-p_+) \cdot |N_-(x)|) = O((1-p_+) \cdot \deg_G(x)). 
\end{align*}
If this holds, then the contribution of the second term in \Cref{eqn:wdeg-nbrs-overview} can also be bounded by $O(\sqrt{1 - p_+})$. This is where we need the condition that the minimum degree of $G$ is $d_1$; otherwise, the deviation probability of the above concentration becomes too large. 

\paragraph{Lower bound on total edge weights. } We will show that with constant probability (over the randomness of $z$), $\wdeg_z(x) = \Omega(\sqrt{1-p_+})$ for a constant density of $x$ under $\mu_z$. If this holds, then the total edge weights can be lower bounded by 
\[
\frac{1}{2} \bE_{x \sim \mu_z}[\wdeg_z(x)] \ge \Omega(\sqrt{1-p_+})
\]
as desired. 

By the assumption in the statement of \Cref{thm:main-general}, for $x \sim \mu$, with constant probability, we have $\deg_G(x) \ge d_2$. Hence, either $|N_+(x)| \ge \deg_G(x) / 2 \ge d_2 / 2$, or $|N_-(x)| \ge \deg_G(x) / 2 \ge d_2 / 2$, with constant probability. We will assume without loss of generality that the former case holds. Fix any $x$ where $|N_+(x)| \ge \deg_G(x) / 2 \ge d_2 / 2$. As before, when $z \sim \nu_x$, some $y \in N_+(x)$ remains in $\cX_z$  when $z_i = \star$, where $i$ is the index of the bit where $x$ and $y$ differ. Again, we can apply concentration bounds to show that it is likely 
\begin{align*}
|N_+(x) \cap \cX_z| \ge \Omega((1-p_-) \cdot |N_+(x)|) = \Omega((1-p_-) \cdot \deg_G(x)). 
\end{align*}
Hence, by \Cref{eqn:wdeg-nbrs-overview}, $\wdeg_z(x) \ge \Omega(\sqrt{1-p_+})$ as desired. 

\paragraph{Handling the third phase. } In the third phase, the algorithm can make $c_2 d_2$ adaptive exact queries. Each query reveals one bit of the input vector $x$, and the posterior distribution of $x$ will change. In particular, the posterior probability of those $x$ that are inconsistent with the revealed bit will become $0$, and the posterior distribution of other $x$'s will be scaled properly. Let $\mu_{z, t}$ denote the posterior distribution of $x$ after the first $t$ queries and let $\cX_{z, t}$ denote the support of $\mu_{z, t}$. 
We can consider the subgraph of the graph on $\cX_z$ considered above, on the subset of vertices $\cX_{z, t}$. All the vertex weights should be scaled to $\mu_{z, t}(x)$, and all edge weights should also be scaled properly. Intuitively, because now we are considering a subgraph, the upper bound on the degrees of ``most'' vertices should still hold. 

To show a lower bound for the total edge weights, let $X$ denote the set of $x$ where $|N_+(x)| \ge \deg_G(x) / 2 \ge d_2 / 2$ and $|N_+(x) \cap \cX_z| = \Omega(\deg_G(x))$. Previous steps can establish that $\mu_z(X) \ge \Omega(1)$ with constant probability. It is not difficult to check that $\bE[\mu_{z, t}(X) \mid z] = \mu_z(X)$, so $\bE[\mu_{z, t}(X) \mid z] \ge \Omega(1)$ with constant probability (over $z$) as well. Hence, with constant probability, vertices in $X$ still carry a constant fraction of the weights in the subgraph on $\cX_{z, t}$. Because each query in the third phase reveals one bit, $|N_+(x) \cap \cX_{z, t}| \ge \Omega(\deg_G(x)) - t$ for any $x \in X$. As $t \le c_2 d_2$, for sufficiently small $c_2$, we still have $|N_+(x) \cap \cX_{z, t}| \ge \Omega(\deg_G(x)) - c_2 d_2 = \Omega(\deg_G(x))$. Hence, we can still use a formula similar to \Cref{eqn:wdeg-nbrs-overview} to obtain a lower bound of total edge weights. 

\section{Proof of the Main Technical Theorem} \label{sec:proof}
In this section, we prove \Cref{thm:main-general}, which we recall below:
\MainTHM*

We will show that there exists a constant $D$ so that theorem statement holds whenever $d_1\ge D$ (this implies $d_2 \ge D$ as well because $d_2 \ge d_1$).
For $d_1\le D$, the result follows from $\N_p(f) = \Omega(\R(f)) = \Omega(\max_x \s(f,x)) = \Omega(d_2)$.
In this proof, asymptotic notations (such as $o(\cdot)$) are with respect to $d_1\to \infty$. In particular, $o(1)$ denotes a function that goes to $0$ as $d_1\to\infty$. By taking $D$ to be sufficiently large, we can assume these $o(1)$ terms are smaller than an arbitrary constant.

\subsection{The Three-Phase Framework}

We use the three-phase framework developed in \citet{feige1994computing,gu2025tight}.
Consider the following three-phase problem.
\begin{itemize}
  \item Phase 1: The algorithm nonadaptively queries each bit $m_1 = c_1 \log d_1$ times using noisy queries for some constant $c_1>0$.
  \item Phase 2: The oracle reveals certain input bits for free.
  \item Phase 3: The algorithm adaptively makes $m_2 = c_2 d_2$ exact queries in expectation for some constant $c_2>0$. 
\end{itemize}
After the three phases, the algorithm needs to output $f(x)$ with success probability at least $1-\delta$ for some constant $\delta>0$.

The following reduction comes from \citet{gu2025tight}.

\begin{proposition} \label{prop:three-phase}
  If there is an algorithm that computes $f(x)$ with success probability at least $1-\delta$ in the worst case using $c_1c_2 d_2 \log d_1$ queries in expectation in the noisy query model, then it can solve the corresponding three-phase problem with success probability at least $1-\delta$ in the worst case.
\end{proposition}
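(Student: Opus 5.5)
The plan is to simulate the given noisy-query algorithm $\cA$ inside the three-phase problem. Phase 2 (the free reveals) has to be specified so that the simulation goes through; as in \citet{gu2025tight}, I take it to be the adaptive ``reveal when enough evidence'' rule.

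\textbf{Phase 1.} For each bit $i$ the algorithm obtains $m_1 = c_1\log d_1$ independent noisy copies of $x_i$. It stores them in a queue $Q_i$.

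\textbf{Phase 3.} The algorithm runs $\cA$ step by step. Whenever $\cA$ asks for $x_i$ and the queue $Q_i$ is nonempty, it pops the next stored noisy copy and returns it. This is distributed exactly like a fresh noisy query, independent of everything else, because the stored samples are i.i.d.\ with flip probability $p$. When $Q_i$ runs out, the algorithm spends one exact query to learn $x_i$. From then on it can answer any further query to $x_i$ by generating $x_i\oplus \mathrm{Bern}(p)$ with its own randomness. This again has exactly the correct distribution.

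The simulated transcript is therefore identically distributed to a true run of $\cA$ on $x$. The output is correct with probability at least $1-\delta$ for every $x$, and Phase 2 is not even needed: the oracle may reveal nothing, or the algorithm may ignore what is revealed.

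\textbf{Cost.} An exact query is made for bit $i$ only if $\cA$ queries $x_i$ more than $m_1$ times. If $T$ is the total number of noisy queries made by $\cA$, the number of exact queries is at most $T/m_1$. Taking expectations, it is at most $c_1c_2d_2\log d_1/(c_1\log d_1) = c_2 d_2 = m_2$.

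Two side issues remain. First, the three-phase problem measures Phase 3 cost in expectation, so this expectation bound is exactly what is needed. Second, if the definition requires a worst-case cap, one applies Markov's inequality and truncates at $m_2/\eta$, losing only $\eta$ in success probability; since the statement is about expectations, no truncation is needed.

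\textbf{Main obstacle.} The only delicate step is verifying that the reused Phase-1 samples are valid answers for an adaptive $\cA$. This holds because each sample is consumed at most once, and which sample is consumed depends only on past answers, which are independent of the unused samples. A short induction over the steps of $\cA$ (coupling the two processes) establishes equality of the transcript distributions.
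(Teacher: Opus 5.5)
Your proposal is correct and matches the paper's proof sketch. Like the paper, you reuse the $m_1$ Phase-1 noisy samples of each bit as $\cA$'s first $m_1$ answers, spend one exact Phase-3 query on any bit queried more than $m_1$ times and then simulate further noisy answers with private randomness, and bound the expected number of such bits by $c_1c_2d_2\log d_1/m_1 = c_2 d_2$; your remarks that Phase 2 can be ignored and that the adaptive coupling is valid just make explicit what the sketch leaves implicit.
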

\begin{proofsketch}
Let $\cA$ be an algorithm for computing $f(x)$ in the noisy query model. 
    The three-phase problem can simulate the noisy query model and supply $\cA$ with noisy bits. Because of Phase 1, this simulation can already support the first $m_1 = c_1 \log d_1$ noisy queries to each bit $x_i$. If $\cA$ makes more than $m_1$ queries to some bit $x_i$, the three-phase problem makes an exact query for $x_i$ in Phase 3, and can simulate all future noisy queries to $x_i$. Since $\cA$ makes  $c_1 c_2 d_2 \log d_1$ queries in expectation, the expected number of $x_i$ to which $\cA$ makes more than $m_1$ queries is at most $(c_1 c_2 d_2 \log d_1) / m_1 = c_2 d_2$. Hence, the three-phase problem makes at most $c_2 d_2$ queries in Phase 3 in expectation. 
\end{proofsketch}

In the definition of $\N_p(f)$, we require the algorithm to be correct with probability at least $2/3$. By repeating $O(\log(1/\delta)) = O(1)$ times and taking the majority answer, the success probability can be boosted to $1-\delta$. Therefore, to prove \Cref{thm:main-general}, it suffices to show the following hardness result for the three-phase problem.
\begin{proposition} \label{prop:three-phase-hard}
  There exists $c_1,c_2,\delta>0$ such that no algorithm can solve the corresponding three-phase problem with success probability at least $1-\delta$ in the worst case.
\end{proposition}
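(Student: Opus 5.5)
We prove Proposition~\ref{prop:three-phase-hard} by Yao's principle: we fix the input distribution $\mu$ from \Cref{thm:main-general}, extended by zero outside $V(G)$, and show that every deterministic three-phase algorithm errs with probability at least $\delta$ on $x\sim\mu$.

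\textbf{Phase 2.} Following \citet{gu2025tight}, we choose it so that the first two phases are equivalent to an observation $z\in\{0,1,\star\}^n$. Each coordinate is revealed independently, with probability $p_+ = 1-d_1^{-c_3\pm o(1)}$ if $x_i=1$ and $p_-=o(1)$ if $x_i=0$. Concretely, after the $m_1$ noisy queries of bit $i$, the oracle reveals $x_i$ with a probability depending only on the number of ones observed. This probability is tuned so that the likelihood ratio of $z_i=\star$ under $x_i=1$ versus $x_i=0$ becomes the prescribed $(1-p_+)/(1-p_-)$. Making $c_1$ small makes $c_3$ small, so $1-p_+\ge d_1^{-1/2}$, say.

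The posterior is then $\mu_z(x)\propto \mu(x)\prod_{i:z_i=\star}(1-p_{x_i})\mathbf 1[x\text{ consistent}]$. For neighbours $x,y$ with $|y|=|x|+1$ this gives $\mu_z(y)/\mu_z(x)=\frac{\deg_G(y)}{\deg_G(x)}\cdot\frac{1-p_+}{1-p_-}$ when both lie in $\cX_z$. Setting $w_z(x,y)=\sqrt{\mu_z(x)\mu_z(y)/(\deg_G(x)\deg_G(y))}$ then yields \eqref{eqn:wdeg-nbrs-overview} by direct substitution.

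\textbf{Balance after Phase 2.} We use the bipartite weighted-graph argument. Set $\epsilon=\sqrt{1-p_+}$ and work with the joint law of $(x,z)$, where $x\sim\mu$ and $z\sim\nu_x$.

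1. Upper bound on degrees. For fixed $x$, $|N_-(x)\cap\cX_z|\sim\Bin(|N_-(x)|,1-p_+)$. Because $\deg_G(x)\ge d_1$ and $(1-p_+)d_1\ge d_1^{1/2}\to\infty$, Chernoff shows that $\wdeg_z(x)\le C\epsilon$ except with probability $o(1)$. Averaging over the joint law, with probability $1-o(1)$ over $z$ the $\mu_z$-mass of vertices with $\wdeg_z(x)>C\epsilon$ is $o(1)$.

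2. Lower bound on edge weight. With $\mu$-probability at least $\gamma/2$ we have $\deg_G(x)\ge d_2$ and $|N_\pm(x)|\ge \deg_G(x)/2$ for a fixed sign; we assume the $+$ sign. Each $y\in N_+(x)$ survives in $\cX_z$ with probability $1-p_-=1-o(1)$, so Chernoff gives $|N_+(x)\cap\cX_z|\ge \deg_G(x)/3$ with high probability, hence $\wdeg_z(x)\ge\epsilon/3$. Call this set of good pairs $X$. We have $\bE_z[\mu_z(X)]\ge\gamma/2-o(1)$, so \Cref{cor:markov} gives $\mu_z(X)\ge\gamma/4$ with probability at least $\gamma/4$. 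The mixed case of $N_-$ is handled symmetrically; there the lower bound comes from the second term of \eqref{eqn:wdeg-nbrs-overview} and is even larger, so it needs only a one-sided count.

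Total edge weight is $\ge \frac12\bE_{\mu_z}\wdeg\ge \gamma\epsilon/24$. Each side $S$ of the bipartition receives at most $C\epsilon\,\mu_z(S)+o(\epsilon)$ of it. We must be careful here: high-degree vertices could be adjacent to many edges. We bound their total edge weight by Cauchy--Schwarz via $w_z(x,y)\le\frac12(\mu_z(x)/\deg_G(x)+\mu_z(y)/\deg_G(y))$ restricted appropriately. So $\mu_z(S)\ge\gamma/(24C)-o(1)$.

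\textbf{Phase 3 (main obstacle).} The difficulty is that queries are adaptive and their number is bounded only in expectation. First, by Markov, truncate to at most $m_2/\eta$ queries at a cost of $\eta$ in error. Let $\mu_{z,t}$ be the posterior after $t$ queries; it is $\mu_z$ conditioned on the revealed bits. Restricting to a subset of vertices leaves the edge formula valid, with all weights scaled by $1/\mu_z(\cX_{z,t})$. Vertex degrees can only decrease, so the upper-bound step survives, provided the bad mass stays small. For that we use that $\mu_{z,t}(B)$ is a martingale, so its expectation is $o(1)$ and it is small with high probability.

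For the lower bound, $\mu_{z,t}(X)$ is also a martingale with mean $\mu_z(X)\ge\gamma/4$, and \Cref{cor:markov} keeps it $\ge\gamma/8$ with constant probability. Each exact query kills at most one neighbour in $N_+(x)$, so $|N_+(x)\cap\cX_{z,t}|\ge\deg_G(x)/3-c_2d_2/\eta\ge\deg_G(x)/6$ once $c_2\le\eta/6$, using $\deg_G(x)\ge d_2$ on $X$. The balance argument then gives $\min_b\mu_{z,T}(f^{-1}(b))\ge c'$ with constant probability $\gamma/8-o(1)$. Hence the error is at least $c'\gamma/16-\eta$, and we choose $\delta$ below this.

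The delicate part is combining the three "with constant probability" events and the Markov truncation consistently, so that the constants $c_1,c_2,\eta,\delta$ are chosen in that order.
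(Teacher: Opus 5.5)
Your outline follows the paper's proof closely: the same hard distribution $\mu$, the same weights $w_z$ and identity \eqref{eqn:wdeg-nbrs}, an $N_+$-based good set whose mass is a martingale through Phase 3, and \Cref{cor:markov}. Truncating Phase 3 at $m_2/\eta$ queries by Markov is a clean substitute for the paper's conditioning on $\cE_{\le 5}$ and $\cE_6$, provided $\eta$ is fixed before $c_2$. Two steps, however, do not go through as written.

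\textbf{The high-degree vertices.} Knowing only that the $\mu_z$-mass of $\{x:\wdeg_z(x)>C\epsilon\}$ is $o(1)$ cannot give the $o(\epsilon)$ error term you need. By \eqref{eqn:max-deg}, weighted degrees go up to $\sqrt{(1-p_-)/(1-p_+)}\approx\epsilon^{-1}$, attained when all neighbours of $x$ are down-neighbours lying in $\cX_z$. So the true contribution $\sum_{x\ \mathrm{bad}}\mu_z(x)\wdeg_z(x)$ can be as large as $\epsilon^{-1}\mu_z(\mathrm{bad})$, which swamps the total $\Theta(\epsilon)$ when the mass is merely $o(1)$. No inequality bridges this. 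In particular, your AM--GM bound is at least the true contribution. On each edge from $x$ to a down-neighbour $y$ it overestimates $w_z(x,y)$ by a factor of order $\epsilon^{-1}$, since $\mu_z(y)/\deg_G(y)$ exceeds $\mu_z(x)/\deg_G(x)$ by the factor $(1-p_-)/(1-p_+)$. What you actually need is bad mass $o(\epsilon^2)=o(d_1^{-c_3})$. This is available if you keep the Chernoff rate. Each $x$ is bad with probability $\exp(-\Omega((1-p_+)\deg_G(x)))\le\exp(-d_1^{1-c_3-o(1)})$, so by Markov the bad mass is of that order with high probability. That is $o(d_1^{-c_3})$ precisely because $c_3<1$; this comparison is where the minimum degree $d_1$ is used in the paper. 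The same quantitative bound, not merely $o(1)$, is needed for $\mu_{z,q}(A_{\mathrm{bad},z})$ in Phase 3.

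\textbf{The sign reduction.} Treating the case $\mu(B_-)\ge\gamma/2$ via the second term of \eqref{eqn:wdeg-nbrs} works after Phase 2 but breaks in Phase 3. Only about $(1-p_+)|N_-(x)|\le d_1^{-c_3+o(1)}\deg_G(x)$ down-neighbours survive Phase 2. For $\deg_G(x)$ of order $d_2$ this is far fewer than the $c_2d_2/\eta$ Phase 3 queries, so ``each query kills at most one neighbour'' gives nothing. The reduction to the $+$ case has to be made by replacing $f$ with $x\mapsto f(\mathbf 1-x)$. This swaps $N_+$ and $N_-$, while Phase 2 still reveals ones with probability $p_+$ and zeros with probability $p_-$. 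Two smaller slips, which only affect constants: the second-term lower bound is of the same order $\epsilon$, not larger, and your final probability should be the product $(\gamma/4)(\gamma/8)$ rather than $\gamma/8$.
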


Let $G$ be the induced subgraph of $H_f$ as in the statement of \Cref{thm:main-general} and $\mu$ be the corresponding distribution on $V(G)$ where $\mu(x) = \frac{\deg_G(x)}{2|E(G)|}$. 
We will show that the three-phase problem is hard even if the input $x$ is drawn from the distribution $\mu$.

\subsection{Phase 1 and 2}
Let $m_1 = c_1 \log d_1$.
Let $a_i$ denote the number of Phase 1 queries to bit $i$ whose answers are $1$.
Then $a_i \sim \Bin(m_1,1-p)$ for $x_i=1$ and $a_i \sim \Bin(m_1,p)$ for $x_i=0$.
For $0\le j\le m_1$, define
\begin{align*}
  p_j = \bP(\Bin(m_1,1-p)=j) = \binom{m_1}{j} (1-p)^j p^{m_1-j}.
\end{align*}
Let $I = \left[p m_1 - m_1^{0.6}, p m_1 + m_1^{0.6}\right]$.

In Phase 2, the oracle reveals the following information:
\begin{itemize}
  \item Step 2a: The oracle reveals all $x_i$ with $a_i \notin I$.
  \item Step 2b: The oracle reveals every $x_i=1$ independently with probability $q_{a_i}$, where $q_k = 1-\frac{p_{m_1-k} p_{k_l}}{p_k p_{m_1-k_l}}$ (for integer $k \in I$) and $k_l=\lceil p m_1 - m_1^{0.6} \rceil$.
\end{itemize}

Step 2b is a reweighting process so that the posterior probability that each coordinate is $1$ does not depend on $a_i$. Intuitively, if $x_i = 0$, then the probability that $a_i = k$ for some value of $k$ and $x_i$ is unrevealed after Step 2b is $p_{m_1 - k}$. On the other hand, if $x_i = 1$, then the probability that $a_i = k$ and $x_i$ is unrevealed after Step 2b is $p_{k} \cdot (1 - q_k) = \frac{p_{m_1-k} p_{k_l}}{p_{m_1 - k_l}}$. The ratio between these two values is always the same value $\frac{p_{k_l}}{p_{m_1 - k_l}}$ regardless of the value of $k$. Thus intuitively, $a_i$ does not give information on $x_i$. 

The difference between \citet{gu2025tight}'s and our  Phase 1 and 2 is that we choose $m_1 = c_1 \log d_1$, while  \citet{gu2025tight} has $m_1 = c_1 \log n$. The analysis remains largely the same, and we only need to replace their dependency on $n$ with $d_1$. Hence, by \cite[Definition 9]{gu2025tight}, observations in Phase 1 and 2 are equivalent to the following process:
\begin{itemize}
  \item Observe every $x_i=1$ independently with probability %
  \begin{align*}
    p_+ = 1-\sum_{k\in I} p_k (1-q_k) = 1 - d_1^{-c_3\pm o(1)},
  \end{align*}
  where $c_3 = c_1 (1-2p) \log \frac{1-p}p$. By picking sufficiently small $c_1$, we can guarantee $c_3 < 1$. 
  \item Observe every $x_i=0$ independently with probability
  \begin{align*}
    p_- = \bP(\Bin(m_1,p) \notin I) = o(1).
  \end{align*}
\end{itemize}

Let $z\in \{0,1,\star\}^n$ denote the revealed information up to Phase 2.
That is, $z_i=x_i$ if $x_i$ is revealed and $z_i=\star$ otherwise.
So $\bP(z_i=1 | x_i=1) = p_+$, $\bP(z_i=0 | x_i=0) = p_-$.

For fixed $z$, let $\mu_z$ denote the distribution of $x$ conditioned on $z$.
Let $\cX_z = \supp \mu_z$. Note that $\cX_z = \{x: z_i\in \{x_i,\star\}\ \forall i\in [n]\} \cap \supp \mu$.

That is, $\mu_z$ is the posterior distribution of $x$ given the observations in Phase 1 and 2.
Let $\nu_x$ denote the distribution of $z$ conditioned on $x$ and let $\nu$ denote the overall distribution of $z$.
Note that by Bayes, 
\begin{align}
    \label{eq:nu-x-Bayes}
    \nu_x(z) = \frac{\mu_z(x) \nu(z)}{\mu(x)} \propto \frac{\mu_z(x) \nu(z)}{\deg_G(x)}. 
\end{align}

\subsection{Balance Lemma}
In this section, we establish a condition for a posterior distribution to be balanced (\Cref{lem:balance}). That is, given the observations, one cannot guess $f(x)$ with error probability $o(1)$.
As an example, we show that at the end of Phase 2, with probability $1-o(1)$, $\mu_z$ is balanced.

For $(x,y)\in E(G)$, define edge weight
\begin{align*}
  w_z(x,y) = \sqrt{\frac{\mu_z(x) \mu_z(y)}{\deg_G(x) \deg_G(y)}}.
\end{align*}
For $x\in \cX_z$, define weighted degree as %
\begin{align*}
  \wdeg_z(x) = \mu_z(x)^{-1} \sum_{y: (x,y)\in E(G)} w_z(x,y).
\end{align*}
(When $\mu_z(x)=0$, define $\wdeg_z(x)=0$.)
The weighted degree satisfies the nice property that
\begin{align*}
  \sum_{x\in f^{-1}(0)} \mu_z(x) \wdeg_z(x)
  = \sum_{x\in f^{-1}(1)} \mu_z(x) \wdeg_z(x).
\end{align*}
Define average weighted degree as
\begin{align*}
  \avgwdeg_z = \bE_{x\sim \mu_z}[\wdeg_z(x)]
   = 2\sum_{x\in f^{-1}(0)} \mu_z(x) \wdeg_z(x)
   = 2\sum_{x\in f^{-1}(1)} \mu_z(x) \wdeg_z(x).
\end{align*}

The following lemma shows that if the weighted degree is not too concentrated on a small set, then the posterior distribution is balanced. In the following, we use the convention that for any measure $\eta$ and any nonnegative function $h$, the product $\eta(A)\max_{x\in A} h(x)$
is defined to be $0$ when $A=\emptyset$.
\begin{lemma}[Balance Lemma] \label{lem:balance}
  Let $A\subseteq \cX_z$.
  Suppose
  \begin{align*}
    \max_{x\in \cX_z\backslash A} \wdeg_z(x) \le C \avgwdeg_z
  \end{align*}
  for some $C>0$.
  Then
  \begin{align*}
    \min\left\{\mu_z(f^{-1}(0)), \mu_z(f^{-1}(1))\right\} \ge \frac 1{2C} - \frac{\mu_z(A) \max_{x\in A} \wdeg_z(x)}{C \avgwdeg_z}.
  \end{align*}
\end{lemma}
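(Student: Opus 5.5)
By the symmetry of the hypothesis under swapping the two sides of the bipartite graph, it suffices to lower bound $\mu_z(f^{-1}(0))$; the same argument with the roles of $0$ and $1$ exchanged handles $\mu_z(f^{-1}(1))$. The argument rests on the identity already recorded above: every edge of $G$ joins $f^{-1}(0)$ to $f^{-1}(1)$, since $G$ is a subgraph of $H_f$. Hence the total edge weight can be computed from either side, and
\begin{align*}
  \sum_{x\in f^{-1}(0)} \mu_z(x)\wdeg_z(x) = \frac{1}{2}\avgwdeg_z .
\end{align*}
The plan is to upper bound the left-hand side by $\mu_z(f^{-1}(0))$ times the maximum weighted degree, while treating the exceptional set $A$ separately.

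Concretely, split the sum over $x\in f^{-1}(0)\cap\cX_z$ into the part in $A$ and the part outside $A$ (vertices outside $\cX_z$ contribute zero). For the part outside $A$, each term satisfies $\mu_z(x)\wdeg_z(x)\le \mu_z(x)\, C\avgwdeg_z$ by hypothesis. Summing gives at most $C\avgwdeg_z\,\mu_z(f^{-1}(0)\setminus A)\le C\avgwdeg_z\,\mu_z(f^{-1}(0))$. For the part inside $A$, bound the sum by $\mu_z(A\cap f^{-1}(0))\max_{x\in A}\wdeg_z(x)\le \mu_z(A)\max_{x\in A}\wdeg_z(x)$; when $A=\emptyset$ this is $0$, consistent with the stated convention. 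Combining the two bounds,
\begin{align*}
  \frac12\avgwdeg_z \le C\avgwdeg_z\,\mu_z(f^{-1}(0)) + \mu_z(A)\max_{x\in A}\wdeg_z(x).
\end{align*}
Dividing by $C\avgwdeg_z$ yields exactly the claimed bound.

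The only real obstacle is the degenerate case $\avgwdeg_z=0$, where the division is illegitimate. I would dispose of it separately. If $\avgwdeg_z=0$, the hypothesis forces $\wdeg_z\equiv 0$ off $A$, and the right-hand side of the lemma involves a division by zero. So the statement should be read as assuming $\avgwdeg_z>0$, which holds in all later applications because there the average weighted degree is lower bounded. I would add this remark and otherwise present the short computation above.
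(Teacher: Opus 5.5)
Your proof is correct and is exactly the paper's argument: use the bipartite identity $\sum_{x\in f^{-1}(0)}\mu_z(x)\wdeg_z(x)=\frac12\avgwdeg_z$, split this sum into the part in $A$ and the part outside $A$, bound each part, and rearrange. Your remark that the statement implicitly requires $\avgwdeg_z>0$ is a fair caveat that the paper leaves unstated, and this condition does hold wherever the lemma is applied.
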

\begin{proof}
  \begin{align*}
    \frac 12 \avgwdeg_z & = \sum_{x\in f^{-1}(0)} \mu_z(x) \wdeg_z(x) \\
    &\le \sum_{x\in A} \mu_z(x) \wdeg_z(x) + \sum_{x\in f^{-1}(0)\backslash A} \mu_z(x) \wdeg_z(x) \\
    &\le \mu_z(A) \max_{x\in A} \wdeg_z(x) + \mu_z(f^{-1}(0)) \max_{x\in \cX_z\backslash A} \wdeg_z(x)\\
    &\le \mu_z(A) \max_{x\in A} \wdeg_z(x) + C \avgwdeg_z \mu_z(f^{-1}(0)).
  \end{align*}
  Rearranging terms, we get
  \begin{align*}
    \mu_z(f^{-1}(0)) \ge \frac 1{2C} - \frac{\mu_z(A) \max_{x\in A} \wdeg_z(x)}{C \avgwdeg_z}.
  \end{align*}
  By the same argument, we get the bound for $\mu_z(f^{-1}(1))$.
\end{proof}

Let $x,y\in \cX_z$, $(x,y)\in E(G)$ differ at bit $i$.
Then we must have $z_i=\star$ and
\begin{align*}
  \frac{w_z(x,y)}{\mu_z(x)} &= \sqrt{ \frac{\mu_z(y)}{\mu_z(x)} \cdot \frac{1}{\deg_G(x) \deg_G(y)} } \\
  &= \frac 1{\deg_G(x)} \sqrt{ \frac{\mu_z(y)}{\mu_z(x)} \cdot \frac{\deg_G(x)}{\deg_G(y)} } \\
  &= \frac 1{\deg_G(x)} \sqrt{ \frac{\nu_y(z)}{\nu_x(z)} } \tag{\Cref{eq:nu-x-Bayes}}\\
  &= \frac 1{\deg_G(x)} \left( \frac{1-p_+}{1-p_-} \right)^{(y_i-x_i)/2},
\end{align*}
In particular,
\begin{align} \label{eqn:max-deg}
  \max_{x\in \cX_z} \wdeg_z(x) \le \sqrt{\frac{1-p_-}{1-p_+}}.
\end{align}

We now study the distribution of weighted degree.
For $x\in V(G)$, let $N_+(x) \defeq \{y:(x,y)\in E(G), |y|=|x|+1\}$ denote the up-neighbors of $x$, and $N_-(x) \defeq \{y:(x,y)\in E(G), |y|=|x|-1\}$ denote the down-neighbors of $x$.
Then for $x\in \cX_z$, we have
\begin{align}
\label{eqn:wdeg-nbrs}
  \wdeg_z(x) = \deg_G(x)^{-1} \left( |N_+(x)\cap \cX_z| \sqrt{\frac{1-p_+}{1-p_-}} + |N_-(x) \cap \cX_z| \sqrt{\frac{1-p_-}{1-p_+}} \right).
\end{align}

We have $|N_+(x)\cap \cX_z| \le |N_+(x)| \le \deg_G(x)$, so the contribution of the first term is $\le \sqrt{\frac{1-p_+}{1-p_-}}$.
For the second term, we need to prove some concentration bounds on $|N_-(x) \cap \cX_z|$.

When $z \sim \nu_x$, each down-neighbor $y$ of $x$ is in $\cX_z$ independently with probability $1-p_+$ (it happens when the corresponding bit of $x$ is not revealed, which happens with probability $1 - p_+$).
So $|N_-(x) \cap \cX_z| \sim \Bin(|N_-(x)|, 1-p_+)$.

By Chernoff bound, $|N_-(x)| \le \deg_G(x)$ and $\deg_G(x) \ge d_1$, we have
\begin{align*}
  &~\bP_{z\sim \nu_x} (|N_-(x) \cap \cX_z| \ge (1+\eps) \deg_G(x) (1-p_+) ) \\
  \le&~ \exp\left( -\frac{\eps^2}{2+\eps} \deg_G(x) (1-p_+) \right) \\
  \le&~ \exp\left( -\frac{\eps^2}{2+\eps} d_1^{1-c_3 \pm o(1)} \right),
\end{align*}
for any $\eps > 0$. 

For given $z$, let $A_{\mathrm{bad},z} = \{x\in \cX_z: |N_-(x) \cap \cX_z| \ge 2 \deg_G(x) (1-p_+)\}$.
Then (we pick $\eps = 1$, and the $\frac{\eps^2}{2+\eps}$ term can be absorbed to the $o(1)$ factor)
\begin{align}
\label{eqn:A-bad-z-bound}
  \bE_{z\sim \nu} \left[\mu_z(A_{\mathrm{bad},z})\right] = \bP_{x\sim \mu} \bP_{z\sim \nu_x} (x\in A_{\mathrm{bad},z}) &\le \exp\left( - d_1^{1-c_3 \pm o(1)} \right).
\end{align}
By Markov's inequality, we have 
\begin{align*}
  \bP_{z\sim \nu} \left(\mu_z(A_{\mathrm{bad},z}) \ge \sqrt{\exp\left( - d_1^{1-c_3 \pm o(1)} \right)}\right) \le \sqrt{\exp\left( - d_1^{1-c_3 \pm o(1)} \right)}.
\end{align*}
Note that $\sqrt{\exp\left( - d_1^{1-c_3 \pm o(1)} \right)} = \exp\left( - d_1^{1-c_3 \pm o(1)} \right)$, for suitable choices of the $o(1)$ terms. Thus, 
\begin{align} \label{eqn:muA_ub}
  \bP_{z\sim \nu} \left(\mu_z(A_{\mathrm{bad},z}) \ge \exp\left( - d_1^{1-c_3 \pm o(1)} \right)\right) \le \exp\left( - d_1^{1-c_3 \pm o(1)} \right).
\end{align}

For $x\in \cX_z\backslash A_{\mathrm{bad},z}$, we have
\begin{align} \label{eqn:wdeg_z_x_Ac}
  \nonumber \wdeg_z(x) &\le \deg_G(x)^{-1} \left( \deg_G(x) \sqrt{\frac{1-p_+}{1-p_-}} + 2 \deg_G(x) (1-p_+) \sqrt{\frac{1-p_-}{1-p_+}} \right) \\
  &\le 4 \sqrt{(1-p_+)(1-p_-)}.
\end{align}

We now have bounds on $\mu_z(A_{\mathrm{bad},z})$ and $\wdeg_z(x)$ for $x\in \cX_z\backslash A_{\mathrm{bad},z}$.
Next, we establish a lower bound on $\avgwdeg_z$.

Fix $x\in V(G)$.
Then $\max\{|N_+(x)|, |N_-(x)|\} \ge \deg_G(x)/2 \ge d_1/2$.
If $|N_-(x)| \ge \deg_G(x)/2$, then
\begin{align*} 
  \bP_{z\sim \nu_x} \left(|N_-(x) \cap \cX_z| \le \frac 14 \deg_G(x) (1-p_+)\right) \le \exp\left( -d_1^{1-c_3 \pm o(1)} \right).
\end{align*}
Similar to $|N_-(x) \cap \cX_z|$, $|N_+(x) \cap \cX_z| \sim \Bin(|N_+(x)|, 1-p_-)$. So if $|N_+(x)| \ge \deg_G(x)/2$, then
\begin{align} 
\label{eqn:conc-n-plus}
  \bP_{z\sim \nu_x} \left(|N_+(x) \cap \cX_z| \le \frac 14 \deg_G(x) (1-p_-)\right) \le
  \exp\left( -d_1^{1-o(1)}\right).
\end{align}
Combining the two cases and recalling \Cref{eqn:wdeg-nbrs}, we get
\begin{align*}
  &~\bP_{z\sim \nu} \bP_{x\sim \mu_z} \left( \wdeg_z(x) \le \frac 14 \sqrt{(1-p_+)(1-p_-)} \right) \\
  =&~ \bP_{x\sim \mu} \bP_{z\sim \nu_x} \left( \wdeg_z(x) \le \frac 14 \sqrt{(1-p_+)(1-p_-)} \right) \\
  \le&~ \exp\left( -d_1^{1-c_3 \pm o(1)} \right).
\end{align*}
By Markov's inequality, we have
\begin{align*}
  \bP_{z\sim \nu} \left(\bP_{x\sim \mu_z} \left( \wdeg_z(x) \le \frac 14 \sqrt{(1-p_+)(1-p_-)} \right) > 1/2 \right) \le \exp\left( -d_1^{1-c_3 \pm o(1)}\right).
\end{align*}
So
\begin{align} \label{eqn:avgwdeg_z_lb}
  \bP_{z\sim \nu} \left(\avgwdeg_z \le \frac 18 \sqrt{(1-p_+)(1-p_-)} \right) \le \exp\left( -d_1^{1-c_3 \pm o(1)}\right).
\end{align}

Let $\cE_1$ be the event that $\mu_z(A_{\mathrm{bad},z}) \le \exp\left( - d_1^{1-c_3 \pm o(1)} \right)$ (recall \Cref{eqn:muA_ub}).
Let $\cE_2$ be the event that $\avgwdeg_z \ge \frac 18 \sqrt{(1-p_+)(1-p_-)}$.
Then by \Cref{eqn:muA_ub,eqn:avgwdeg_z_lb}, we have
\begin{align}
  \bP_{z\sim \nu} (\cE_1 \cap \cE_2) \ge 1-\exp\left( -d_1^{1-c_3 \pm o(1)}\right).
\end{align}

Conditioned on $\cE_1 \cap \cE_2$, we apply \Cref{lem:balance}.
By \Cref{eqn:wdeg_z_x_Ac} and definition of $\cE_2$, we can take $C=32$ in \Cref{lem:balance}.
So we have
\begin{align*}
  \min\left\{\mu_z(f^{-1}(0)), \mu_z(f^{-1}(1))\right\} &\ge \frac 1{64} - \frac{\mu_z(A_{\mathrm{bad},z}) \max_{x\in A_{\mathrm{bad},z}} \wdeg_z(x)}{32 \avgwdeg_z} \\
  &\ge \frac 1{64} - \frac{\exp\left( - d_1^{1-c_3 \pm o(1)} \right) \sqrt{\frac{1-p_-}{1-p_+}}}{32 \cdot \frac 18 \sqrt{(1-p_+)(1-p_-)}} \\
  &\ge \frac 1{64} - o(1).
\end{align*}

This means that with probability at least $1-\exp\left( -d_1^{1-c_3 \pm o(1)}\right)$ over $z$, the posterior distribution $\mu_z$ is balanced after Phase 2.

\subsection{Phase 3}
In Phase 3, the algorithm adaptively queries $m_2 = c_2 d_2$ bits in expectation.
We will show that with constant probability, after the $m_2$ queries, $\avgwdeg_z$ decreases by at most a constant factor. Then we can use \Cref{lem:balance} again to show that the posterior distribution is still balanced.

Let $B = \{x\in V(G): \deg_G(x) \ge d_2\}$. By the definition of $d_2$, $\mu(B) \ge \gamma$.
For every $x\in B$, we have $\max\{|N_+(x)|, |N_-(x)|\} \ge \deg_G(x)/2 \ge d_2/2$.
Let $B_+ = \{x\in B: |N_+(x)| \ge \deg_G(x)/2\}$ and $B_- = \{x\in B: |N_-(x)| \ge \deg_G(x)/2\}$.
Then $\max\{\mu(B_+), \mu(B_-)\} \ge \mu(B)/2 \ge \gamma/2$.
WLOG assume that $\mu(B_+) \ge \gamma/2$.
(If not, we can define $g(x) = f(\mathbf{1}-x)$ and prove a lower bound for $g$ instead. Clearly, $\N_p(f) = \N_p(g)$. )

We show that after Phase 2, with constant probability, vertices in $B_+$ contribute a constant amount to $\avgwdeg_z$.
Let $B_{\mathrm{good},z} = \{x\in B_+: |N_+(x) \cap \cX_z| \ge \frac 14 \deg_G(x) (1-p_-)\}$. %
For $x\in B_+$, by the same analysis as \Cref{eqn:conc-n-plus} (and now we have $\deg_G(x) \ge d_2$,  instead of the bound $\deg_G(x) \ge d_1$ used in \Cref{eqn:conc-n-plus}),
\begin{align*}
  \bP_{z\sim \nu_x} \left( x\in B_{\mathrm{good},z} \right) \ge 1-\exp\left( -d_2^{1-c_3 \pm o(1)} \right) = 1-o(1).
\end{align*}

So
\begin{align*}
  \bE_{z\sim \nu} \mu_z(B_{\mathrm{good},z})
  = \bE_{x\sim \mu} \left[\mathbbm{1}_{x\in B_+} \bP_{z\sim \nu_x} \left( x\in B_{\mathrm{good},z} \right) \right]
  \ge \gamma/2-o(1).
\end{align*}

By \Cref{cor:markov}, we have 
\begin{align*}
  \bP_{z\sim \nu} \left(\mu_z(B_{\mathrm{good},z}) \ge \gamma/4\right) \ge \gamma/4-o(1).
\end{align*}
Let $\cE_3$ be the event that after Phase 2, $\mu_z(B_{\mathrm{good},z}) \ge \gamma/4$.
Then $\bP_{z\sim \nu} (\cE_3) \ge \gamma/4-o(1)$.

For any $k$, define $\cE_{\le k} = \bigcap_{j=1}^k \cE_j$.
Then $\bP_{z\sim \nu} (\cE_{\le 3}) \ge \gamma/4 - o(1)$ by union bound. 

Note that for $x\in B_{\mathrm{good},z}$,  by \Cref{eqn:wdeg-nbrs}, we have
\begin{align*}
  \wdeg_z(x) \ge \frac 14 \sqrt{(1-p_+)(1-p_-)}.
\end{align*}
So at the beginning of Phase 3, conditioned on $\cE_{\le 3}$, we have
\begin{align*}
  \avgwdeg_z \ge \sum_{x\in B_{\mathrm{good},z}} \mu_z(x) \wdeg_z(x) \ge \frac{\gamma}{16} \sqrt{(1-p_+)(1-p_-)}.
\end{align*}
This bound is weaker than \Cref{eqn:avgwdeg_z_lb}, but the analysis is more robust under queries in Phase 3.

In the following, we will study the effect of queries in Phase 3.
Let $q$ be the number of queries in Phase 3. Note that $\bE[q] = m_2$.
Let $i_1,\ldots,i_q$ denote the queried bits in Phase 3 and $b_1,\ldots,b_q$ denote the corresponding query results.
For $0\le t\le q$, let $\mu_{z,t}$ denote the posterior distribution after $t$ queries in Phase 3. Then $\mu_{z,0} = \mu_z$.

For the $t$-th query, the query result $b_t$ is equal to $b\in \{0,1\}$ with probability $\bP_{x\sim \mu_{z,t-1}} (x_{i_t}=b)$.
For a fixed $x$, $\mu_{z,t}(x)$ changes to $0$ if $x_{i_t} \ne b_t$ and changes to $\frac{\mu_{z,t-1}(x)}{\bP_{x'\sim \mu_{z,t-1}} (x'_{i_t}=b_t)}$ if $x_{i_t} = b_t$.
The expectation is equal to $\mu_{z,t-1}(x)$.
Therefore, $\mu_{z,t}(x)$ is a martingale over the query results.
In particular, $\bE[\mu_{z,q} \mid z] = \mu_z$ by the optional stopping theorem.

Conditioned on $\cE_{\le 3}$,
\begin{align*}
  \bE\left[\mu_{z,q}(B_{\mathrm{good},z})\right] = \bE\left[\mu_z(B_{\mathrm{good},z})\right] \ge \gamma/4.
\end{align*}
By \Cref{cor:markov}, we have
\begin{align*}
  \bP \left( \mu_{z,q}(B_{\mathrm{good},z}) \ge \gamma/8 \right) \ge \gamma/8.
\end{align*}
Let $\cE_4$ be the event that $\mu_{z,q}(B_{\mathrm{good},z}) \ge \gamma/8$.
Then $\bP(\cE_4 \mid \cE_{\le 3}) \ge \gamma/8$.

Because $\bE[\mu_{z,q} \mid z] = \mu_z$, we have $\bE[\mu_{z,q}(A_{\mathrm{bad},z})] = \bE[\mu_z(A_{\mathrm{bad},z})] \le \exp\left( - d_1^{1-c_3 \pm o(1)} \right)$ (recall \Cref{eqn:A-bad-z-bound}).
By Markov's inequality, we have (similar to \Cref{eqn:muA_ub})
\begin{align*}
  \bP\left( \mu_{z,q}(A_{\mathrm{bad},z}) \ge \exp\left( - d_1^{1-c_3 \pm o(1)} \right) \right) \le \exp\left( - d_1^{1-c_3 \pm o(1)} \right).
\end{align*}
Let $\cE_5$ be the event that $\mu_{z,q}(A_{\mathrm{bad},z}) \le \exp\left( - d_1^{1-c_3 \pm o(1)} \right)$.
Then 
\[\bP(\cE_5 \mid \cE_{\le 3}) \ge 1 - \bP(\neg \cE_5 \mid \cE_{\le 3}) \ge 1 - \bP(\neg \cE_5) / \bP(\cE_{\le 3}) \ge 1-o(1).\]
By union bound, $\bP(\cE_{\le 5}) = \bP(\cE_{\le 3}) \bP(\cE_4 \cap \cE_5 \mid \cE_{\le 3}) \ge (\gamma/4 - o(1)) (\gamma/8 - o(1)) = \gamma^2/32 - o(1)$.

Because $\bE[q] = m_2$, we have
\begin{align*}
  \bE[q \mid \cE_{\le 5}] \le \frac{m_2}{\bP(\cE_{\le 5})} \le (1+o(1))\frac{32m_2}{\gamma^2}.
\end{align*}
So conditioned on $\cE_{\le 5}$, with probability at least $1/2-o(1)$, we have $q\le \frac{64m_2}{\gamma^2} = \frac{64 c_2 d_2}{\gamma^2}$. Let $\cE_6$ be the event that $q\le \frac{64 c_2 d_2}{\gamma^2}$.
Then $\bP(\cE_{\le 6}) \ge \gamma^2/64 - o(1)$.

From now on we condition on $\cE_{\le 6}$.
We would like to show that $\mu_{z,q}$ is balanced.
For $0\le t\le q$, define $\cX_{z,t} = \supp \mu_{z,t}$ and
\begin{align*}
  w_{z,t}(x,y) &= \sqrt{\frac{\mu_{z,t}(x) \mu_{z,t}(y)}{\deg_G(x) \deg_G(y)}}, \\
  \wdeg_{z,t}(x) &= \mu_{z,t}(x)^{-1} \sum_{y: (x,y)\in E(G)} w_{z,t}(x,y),\\
  \avgwdeg_{z,t} &= \bE_{x\sim \mu_{z,t}}[\wdeg_{z,t}(x)].
\end{align*}
We have the following variation of \Cref{lem:balance} for $\mu_{z,q}$.
\begin{lemma}[Balance Lemma] \label{lem:balance-mu-z-q}
  Let $A\subseteq \cX_{z,q}$.
  Suppose
  \begin{align*}
    \max_{x\in \cX_{z,q}\backslash A} \wdeg_{z,q}(x) \le C \avgwdeg_{z,q}
  \end{align*}
  for some $C>0$.
  Then
  \begin{align*}
    \min\left\{\mu_{z,q}(f^{-1}(0)), \mu_{z,q}(f^{-1}(1))\right\} \ge \frac 1{2C} - \frac{\mu_{z,q}(A) \max_{x\in A} \wdeg_{z,q}(x)}{C \avgwdeg_{z,q}}.
  \end{align*}
\end{lemma}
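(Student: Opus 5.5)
The plan is to repeat the argument of \Cref{lem:balance} verbatim, with $\mu_z$, $\wdeg_z$, $\avgwdeg_z$, $\cX_z$ replaced by $\mu_{z,q}$, $\wdeg_{z,q}$, $\avgwdeg_{z,q}$, $\cX_{z,q}$. That argument used only two facts:
\begin{itemize}
\item the weighted-degree mass splits equally between $f^{-1}(0)$ and $f^{-1}(1)$;
\item the measure is supported on $\cX_z$.
\end{itemize}
So the first step is to check that both still hold for $\mu_{z,q}$.

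\textbf{Step 1: the splitting identity.} For any $x\in\cX_{z,q}$,
\[
\mu_{z,q}(x)\wdeg_{z,q}(x)=\sum_{y:(x,y)\in E(G)} w_{z,q}(x,y).
\]
For $x\notin\cX_{z,q}$ both sides vanish, because then $\mu_{z,q}(x)=0$ forces $w_{z,q}(x,y)=0$. Since $G$ is a subgraph of $H_f$, every edge joins a vertex of $f^{-1}(0)$ to a vertex of $f^{-1}(1)$. The weight $w_{z,q}$ is symmetric in its two arguments. Summing over $x\in f^{-1}(0)$ therefore counts each edge weight exactly once, and so does summing over $x\in f^{-1}(1)$. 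Hence
\[
\sum_{x\in f^{-1}(0)}\mu_{z,q}(x)\wdeg_{z,q}(x)=\sum_{x\in f^{-1}(1)}\mu_{z,q}(x)\wdeg_{z,q}(x)=\tfrac12\avgwdeg_{z,q}.
\]

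\textbf{Step 2: the chain of inequalities.} Split the sum over $f^{-1}(0)$ into the part in $A$ and the part in $f^{-1}(0)\setminus A$. Only points of $\cX_{z,q}$ contribute, so the second part lies in $\cX_{z,q}\setminus A$. This gives
\[
\tfrac12\avgwdeg_{z,q}\le \mu_{z,q}(A)\max_{x\in A}\wdeg_{z,q}(x)+\mu_{z,q}(f^{-1}(0))\max_{x\in\cX_{z,q}\setminus A}\wdeg_{z,q}(x).
\]
Apply the hypothesis $\max_{x\in\cX_{z,q}\setminus A}\wdeg_{z,q}(x)\le C\avgwdeg_{z,q}$ and divide by $C\avgwdeg_{z,q}$. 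This yields the claimed lower bound on $\mu_{z,q}(f^{-1}(0))$, and the same argument with $0$ and $1$ swapped handles $f^{-1}(1)$.

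\textbf{Step 3: degenerate cases.} If $A=\emptyset$, the convention sets the first term to $0$ and the argument goes through unchanged. If $\avgwdeg_{z,q}=0$, the hypothesis forces every weighted degree outside $A$ to be $0$, and the right-hand side is not well defined. I would assume $\avgwdeg_{z,q}>0$, as is implicit in the statement (and as holds where it will be applied, via $\cE_4$). No step is a real obstacle; the only point needing care is the symmetric-edge identity in Step 1.
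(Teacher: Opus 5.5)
Your proposal is correct and is the argument the paper intends. The paper omits the proof as ``essentially the same as \Cref{lem:balance},'' namely splitting $\sum_{x\in f^{-1}(0)}\mu_{z,q}(x)\wdeg_{z,q}(x)=\frac12\avgwdeg_{z,q}$ into the part in $A$ and the part outside $A$. Your explicit check that the splitting identity still holds for $\mu_{z,q}$ (symmetric edge weights on the bipartite graph $G$), and your remark that $\avgwdeg_{z,q}>0$ is implicitly assumed (guaranteed by $\cE_4$ where the lemma is applied), fill in exactly the details the paper leaves unstated.
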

The proof is essentially the same as \Cref{lem:balance} and is omitted.

For $x\in \cX_{z,t}$, we have
\begin{align*}
  \wdeg_{z,t}(x) &= \sum_{y: (x,y)\in E(G)} \sqrt{\frac{\mu_{z,t}(y)}{\mu_{z,t}(x)} \cdot \frac{1}{\deg_G(x) \deg_G(y)} } \\
  & = \sum_{y: (x,y)\in E(G)} \mathbbm{1}_{y\in \cX_{z,t}} \sqrt{\frac{\mu_{z}(y)}{\mu_{z}(x)} \cdot \frac{1}{\deg_G(x) \deg_G(y)} } \\
  & = \deg_G(x)^{-1} \left( |N_+(x)\cap \cX_{z,t}| \sqrt{\frac{1-p_+}{1-p_-}} + |N_-(x) \cap \cX_{z,t}| \sqrt{\frac{1-p_-}{1-p_+}} \right).
\end{align*}
For any $x\in B_{\mathrm{good},z} \cap \cX_{z,q}$, $|N_+(x) \cap \cX_{z,t}|$ decreases by at most one after each query.
So
\begin{align}
  |N_+(x) \cap \cX_{z,q}| \ge |N_+(x) \cap \cX_{z}| - q \ge \frac 14 \deg_G(x) (1-p_-) - \frac{64 c_2 d_2}{\gamma^2}.
\end{align}
Because $\deg_G(x) \ge d_2$, for $c_2=\frac{1}{1024}\gamma^2$, the last expression is at least $\frac 18 \deg_G(x) (1-p_-)$ for sufficiently large $d_1$ as $p_- = o(1)$.
Then
\begin{align*}
  \avgwdeg_{z,q} &\ge \sum_{x\in B_{\mathrm{good},z} \cap \cX_{z,q}} \mu_{z,q}(x) \wdeg_{z,q}(x) \\
  &\ge \mu_{z,q}(B_{\mathrm{good},z}) \cdot \frac 18 \sqrt{(1-p_+)(1-p_-)} \tag{by definition of $B_{\mathrm{good},z}$}\\
  &\ge \frac{\gamma}{64} \sqrt{(1-p_+)(1-p_-)} \tag{by definition of $\cE_4$}.
\end{align*}

On the other hand $\wdeg_{z,q}(x) \le \wdeg_z(x)$ for all $x\in \cX_{z,q}$.
So
\begin{align*}
  \max_{x\in \cX_{z,q}\backslash A_{\mathrm{bad},z}} \wdeg_{z,q}(x) \le \max_{x\in \cX_z \backslash A_{\mathrm{bad},z}} \wdeg_z(x) \overset{\Cref{eqn:wdeg_z_x_Ac}}\le 4 \sqrt{(1-p_+)(1-p_-)} \le \frac{256}{\gamma} \avgwdeg_{z,q}.
\end{align*}
Finally, let $A \defeq A_{\mathrm{bad},z} \cap \cX_{z,q}$, then
\begin{align*}
  &~\frac{\mu_{z,q}(A) \max_{x\in A} \wdeg_{z,q}(x)}{\avgwdeg_{z,q}}\\
  \le&~  \frac{\mu_{z,q}(A) \max_{x\in A} \wdeg_{z}(x)}{\frac{\gamma}{64} \sqrt{(1-p_+)(1-p_-)}} \tag{by definition of $\cE_5$}\\
  \le&~  \frac{\exp\left( - d_1^{1-c_3 \pm o(1)} \right) \sqrt{\frac{1-p_-}{1-p_+}}}{\frac{\gamma}{64} \sqrt{(1-p_+)(1-p_-)}} \tag{\Cref{eqn:max-deg}}\\
  =&~ o(1).
\end{align*}
We now apply \Cref{lem:balance-mu-z-q} with $C = \frac{256}{\gamma}$ and $A$ to get that $\mu_{z,q}$ is balanced.
That is, the posterior distribution $\mu_{z,q}$ satisfies
\begin{align*}
  \min\left\{\mu_{z,q}(f^{-1}(0)), \mu_{z,q}(f^{-1}(1))\right\} \ge \frac{\gamma}{512} - o(1)
\end{align*}
conditioned on $\cE_{\le 6}$. Recall $\bP(\cE_{\le 6}) \ge \gamma^2 / 64 - o(1)$. 
Therefore, for small enough $\delta>0$, any algorithm cannot compute $f(x)$ with error probability at most $\delta$ in the three-phase problem.
This finishes the proof.

\section{Applications} \label{sec:app}
In this section, we show various applications of \Cref{thm:main-general}. We start with the following lemma which will be used in several of the proofs.

\begin{lemma}
\label{lem:many-high-deg-vertices}
    Let $\lambda \in [0, 1), \gamma_1, \gamma_2, \gamma_3 > 0, p \in (0, 1 / 2)$ be constants. Then there exists $c = c(\lambda, \gamma_1, \gamma_2, \gamma_3, p)$ such that the following holds. 
    Let $f: \{0, 1\}^n \rightarrow \{0, 1\}$ be a Boolean function and let $G$ be an induced subgraph of $H_f$. Suppose there exists a set $S$ of $\gamma_1 n^{-\lambda} \cdot |V(G)|$ vertices with degree at least $\gamma_2 n$ in $G$, and the total number of edges in $G$ is at most $\gamma_3 n |S|$.  Then $\N_p(f) \ge c n \log n$. 
\end{lemma}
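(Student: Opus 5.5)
The plan is to apply \Cref{thm:main-general} to a suitable induced subgraph $G'$ of $G$, with $d_1$ of order $n^{\Omega(1)}$ (so that $\log(1+d_1)=\Omega(\log n)$) and $d_2=\Theta(n)$. The subgraph $G$ itself may have vertices of tiny degree, so $d_1$ is the issue. I would obtain $G'$ by the standard peeling argument: repeatedly delete any vertex whose current degree is below a threshold $\tau$. Here $\tau = \epsilon n^{1-\lambda}$ for a small constant $\epsilon$. The resulting induced subgraph $G'$ has minimum degree at least $\tau$, provided it is nonempty.

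First, I would bound what peeling destroys. Each deletion removes fewer than $\tau$ edges, so at most $\tau|V(G)| = \epsilon n^{1-\lambda}|V(G)| = (\epsilon/\gamma_1) n |S|$ edges are removed in total.

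Next, I would look at the edges incident to $S$. Since each vertex of $S$ has degree at least $\gamma_2 n$, the number of edges of $G$ with an endpoint in $S$ is at least $\gamma_2 n|S|/2$. Choosing $\epsilon \le \gamma_1\gamma_2/8$, at least $\gamma_2 n|S|/4$ of these edges survive in $G'$. In particular $G'$ has a nonempty edge set.

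Then I would find the high-degree vertices of $G'$. Let $S'$ be the set of vertices of $S$ whose degree in $G'$ is at least $\gamma_2 n/8$. Every $G'$-vertex has degree at most $n$. Vertices of $S\setminus S'$ are incident to at most $\gamma_2 n|S|/8$ surviving edges. Hence vertices of $S'$ are incident to at least $\gamma_2 n|S|/8$ edges of $G'$, and so
\[
\sum_{x\in S'}\deg_{G'}(x) \ge \gamma_2 n|S|/8 .
\]

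Now I would measure $S'$ under the degree-weighted distribution $\mu'$ on $V(G')$:
\[
\mu'(S') = \frac{\sum_{x\in S'}\deg_{G'}(x)}{2|E(G')|} \ge \frac{\gamma_2 n|S|/8}{2\gamma_3 n|S|} = \frac{\gamma_2}{16\gamma_3}.
\]
This uses $|E(G')|\le|E(G)|\le\gamma_3 n|S|$. So with $\gamma=\min\{1,\gamma_2/(16\gamma_3)\}$ and $d_2=\max\{\gamma_2 n/8,\,d_1\}$, the hypothesis of \Cref{thm:main-general} holds. If $d_1 > \gamma_2 n/8$, take $d_2=d_1$ instead; then trivially $\bP(\deg\ge d_2)=1$.

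Finally, \Cref{thm:main-general} gives
\[
\N_p(f)\ge c\, d_2\log(1+d_1) \ge c\,(\gamma_2 n/8)\log(1+\epsilon n^{1-\lambda}) = \Omega((1-\lambda)\, n\log n).
\]
Since $\lambda<1$ is a constant, this is $\Omega(n\log n)$. Small $n$ is absorbed into the constant.

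The main obstacle is making sure the peeling threshold simultaneously keeps $d_1$ polynomial in $n$ and destroys only a small fraction of the edges at $S$. The hypothesis $|S|=\gamma_1 n^{-\lambda}|V(G)|$ is exactly what balances these two requirements, via $\tau|V(G)|=O(n|S|)$. The remaining steps are routine averaging.
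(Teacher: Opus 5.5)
Your proposal is correct and follows the paper's proof essentially step for step. Like the paper, you peel vertices of degree below $\Theta(n^{1-\lambda})$, use $|V(G)| = n^{\lambda}|S|/\gamma_1$ to show that only $O(n|S|)$ edges are lost, conclude that a constant $\mu$-mass of $S$ keeps degree $\Omega(n)$, and apply \Cref{thm:main-general}. The differences are cosmetic: you count surviving edges at $S$ and average directly instead of invoking \Cref{cor:markov}, and you explicitly handle the requirement $d_2 \ge d_1$, which the paper glosses over.
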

\begin{proof}
    We iteratively remove vertices whose degrees are smaller than $\frac{1}{4} \gamma_1 \gamma_2 n^{1-\lambda}$ from $G$ until there are no such vertices left. Let $G'$ denote the remaining graph and let $S'$ denote $V(G') \cap S$. 

    The total number of edges we delete from $G$ is at most $\frac{1}{4} \gamma_1 \gamma_2 n^{1-\lambda} \cdot |V(G)|$. Because initially, the total degree of all vertices from $S$ is at least  $|S| \cdot \gamma_2 n$, the total degree of all vertices from $S'$ in $G'$ is at least (removing one edge can decrease the degrees of two vertices)
    \begin{align*}
    |S| \cdot \gamma_2 n - 2 \cdot \frac{1}{4} \gamma_1 \gamma_2 n^{1-\lambda} \cdot |V(G)| \ge \frac{1}{2} |S| \cdot \gamma_2 n. 
    \end{align*}
    Hence, if we sample a random $s \in S$, $\bE[\deg_{G'}(s) / n] \ge \frac{1}{2} |S| \cdot \gamma_2 n / |S| / n \ge \gamma_2 / 2$ (here, if $s \not \in S'$, we regard $\deg_{G'}(s)$ as $0$). By \Cref{cor:markov}, $\Pr(\deg_{G'}(s) / n \ge \gamma_2 / 4) \ge \gamma_2 / 4$, i.e., there are $\frac{1}{4} \gamma_2 |S|$ vertices whose degrees are at least $\frac{1}{4} \gamma_2 n$ in $G'$. We call these vertices $S_{\text{good}}$. 
    
    Let $\mu$ be the distribution from \Cref{thm:main-general}'s statement applied to $G'$, i.e., $\mu(x) = \frac{\deg_{G'}(x)}{2|E(G')|}$ for $x \in V(G')$. We have
    \begin{align*}
        \mu(S_{\text{good}}) & = \frac{\sum_{s \in S_{\text{good}}} \deg_{G'}(s)}{2|E(G')|}\\
        & \ge \frac{\left(\frac{1}{4} \gamma_2 |S|\right) \cdot \left(\frac{1}{4} \gamma_2 n\right)}{2|E(G)|}\\
        & \ge \frac{\left(\frac{1}{4} \gamma_2 |S|\right) \cdot \left(\frac{1}{4} \gamma_2 n\right)}{2 \cdot \gamma_3 n |S|}\\
        & \ge \frac{\gamma_2^2}{32 \gamma_3}. 
    \end{align*}
    Therefore, we can apply \Cref{thm:main-general} with $d_1 \ge \frac{1}{4} \gamma_1 \gamma_2 n^{1-\lambda}$ and $d_2 \ge \frac{1}{4} \gamma_2 n$ to obtain 
    \[
    \N_p(f) \ge c \frac{1}{4} \gamma_2 n \cdot \log\left(1 + \frac{1}{4} \gamma_1 \gamma_2 n^{1-\lambda}\right) \ge c' n \log n
    \] 
    for some constant $c'$. 
\end{proof}

Next, we prove the lower bound for biased random functions. 
\Biased*
\begin{proof}
First, by Chernoff bound, the number of $x$ where $f(x) = 1$ is between $\frac{2}{3} q \cdot 2^n$ and $\frac{4}{3} q \cdot 2^n$ with probability $\ge 1 - \exp(-c_1 q 2^n)$ for some constant $c_1 > 0$. 
Let this event be $\cE_1$. 

    Next, for any fixed $x$, consider its neighbors in the Boolean hypercube. The number of them that evaluate to $1$ is distributed as $\Bin(n, q)$. Let us denote the random variable by $Y$. We assume $n$ is sufficiently large so that $q \le 1/2$. Then by Chernoff bound, 
    \begin{align*}
        \Pr\left(Y \ge \frac{2}{3} n\right) \le \exp(-c_2 n)
    \end{align*}
    for some constant $c_2 > 0$. Let $X_{\text{bad}} \subseteq \{0, 1\}^n$ be the set of $x$ whose neighborhood contains more than $\frac{2}{3} n$ elements that evaluate to $1$. Hence, $\bE\left[|X_{\text{bad}}|\right] \le 2^n \cdot \exp(-c_2n)$. By Markov's inequality, $\Pr(|X_{\text{bad}}| > 2^n \cdot \exp(-c_2n / 2)) \le \exp(-c_2n / 2)$. Let $\cE_2$ denote the event where $|X_{\text{bad}}| \le 2^n \cdot \exp(-c_2n / 2)$. 

    We will show that under $\cE_1 \cap \cE_2$ (which happens with probability $1-\exp(-c n)$ for some constant $c$ by union bound), $\N_p(f) \ge \Omega(n \log n)$. 

    For any $x \in f^{-1}(1) \setminus X_{\text{bad}}$, it has at least $n/3$ neighbors in $H_f$. Furthermore, under $\cE_1 \cap \cE_2$, $\left| f^{-1}(1) \setminus X_{\text{bad}} \right| \ge \frac{2}{3} q \cdot 2^n - |X_{\text{bad}}| \ge (\frac{2}{3} - o(1)) q \cdot 2^n$. Finally, under $\cE_1 \cap \cE_2$, the total number of edges in $H_f$ is at most $\frac{4}{3} q \cdot n\cdot 2^n \le (2+o(1)) n | f^{-1}(1) \setminus X_{\text{bad}}|$.
    
    Therefore, we can apply \Cref{lem:many-high-deg-vertices} where $G \defeq H_f$ and $S \defeq f^{-1}(1) \setminus X_{\text{bad}}$. This finishes the proof. 
\end{proof}

Next, we show our lower bound for the subsequence counting problem: 
\Subseq*
\begin{proof}
    Assume without loss of generality that $t_k = 1$, by complementing all bits if $t_k = 0$. Also, assume without loss of generality $t_j = 0$ for some $j \in [k]$, as otherwise the lower bound is implied by the known lower bound of computing $|x|$ \citep{feige1994computing}. 

    Let $C$ be a sufficiently large constant. Let $\ell = \lfloor n / (k + (C-1)(k - |t|)) \rfloor$ and $r = n - \ell ((k + (C-1)(k - |t|)))$. 
    We will consider a subset of all possible inputs, where all but the last $\ell$ bits are fixed. More precisely, let $L_1$ be the concatenation of $\ell$ bits of $1$'s, and let $L_0$ be the concatenation of $C\ell$ bits of $0$'s. Let $P$ be $r$ copies of the bit $1 - t_1$. For $x \in \{0, 1\}^\ell$, define 
    \[
    g(x) = f(P, L_{t_1}, L_{t_2}, \ldots, L_{t_{k-1}}, x). 
    \]
    Note that the first $r$ bits cannot be involved in any subsequence $t$. 
    Clearly, an $\Omega(\ell \log \ell)$ lower bound for computing $g$ will imply an $\Omega(\ell \log \ell) = \Omega(n \log n)$ lower bound for computing $f$. 

    For any $x_i = 1$, the number of subsequences $t$ using $x_i$ in $L_{t_1}, L_{t_2}, \ldots, L_{t_{k-1}}, x$ is at least $\prod_{j=1}^{k-1} |L_{t_j}| = C^{k - |t|} \cdot \ell^{k-1}$. On the other hand, for any $x_i = 0$, the number of subsequences involving $x_i$ is at most $k^{k-1} C^{k-|t| - 1} \ell^{k-1}$ (by using the fact that the number of $1$'s in the whole sequence is at most $k\ell$ and the number of $0$'s in the whole sequence is at most $Ck\ell$). Hence, by setting $C > 2k^{k-1}$, we can guarantee that $g(y) - g(x) \ge c_1 \cdot \ell^{k-1}$ for some constant $c_1 > 0$, where $y$ and $x$ differ at exactly one bit and $y_i = 1, x_i = 0$ on that bit. 

    On the other hand, if $y$ and $x$ differ at one bit, we have $|g(y) - g(x)| = O(\ell^{k-1})$, because each bit can only be involved in $O(\ell^{k-1})$ subsequences. Hence, we can apply the  Azuma-Hoeffding inequality to obtain that, 
    \begin{align}
    \label{eqn:subseq:azuma}
    \bP_{x \sim \{0, 1\}^\ell}\left(|g(x) - \bE[g(x)]| \le c_2 \ell^{k-0.5} \right) \ge 0.9
    \end{align}
    for some constant $c_2 > 0$.

    Furthermore, let $B_{\text{bad}}$ denote the set of $x \in \{0, 1\}^\ell$ where the number of $0$'s is less than $\ell / 3$. By Chernoff bound, $|B_{\text{bad}}| \le \exp(-c_3 \ell) \cdot 2^\ell$ for some constant $c_3 > 0$.  
    
    Next, we split $\bZ_{\ge 0}$ into intervals of length $c_1 \cdot \ell^{k-1}$, and we denote the intervals by $I_1, I_2, \ldots.$ Let $I_{m}$ be the interval maximizing $|g^{-1}(I_m)|$. Let $S$ denote $g^{-1}(I_m) \setminus B_{\text{bad}}$. Let $T$ be the largest element in $I_m$, and consider the threshold version $g_T$ of $g$ defined as $g_T(x) = [g(x) \le T]$. 

    \begin{itemize}
        \item By \Cref{eqn:subseq:azuma}, there exist $O(\sqrt{\ell})$ intervals $I_a, I_{a+1}, \ldots, I_b$ such that $\sum_{i=a}^b |g^{-1}(I_i)| \ge 0.9 \cdot 2^\ell$. Hence, $|g^{-1}(I_m)| \ge 0.9 \cdot 2^\ell / O(\sqrt{\ell}) = \Omega(2^\ell / \sqrt{\ell})$ as $m$ maximizes $|g^{-1}(I_m)|$. By the upper bound on $B_{\text{bad}}$, this also implies $|S| \ge |g^{-1}(I_m)| - \exp(-c_3 \ell) \cdot 2^\ell = \Omega(|g^{-1}(I_m)|) = \Omega(2^\ell / \sqrt{\ell})$.
        \item For every $x \in S$, if we flip any $x_i = 0$ to $1$ and obtain $y$, $g(y) - g(x) \ge c_1 \cdot \ell^{k-1}$ by previous discussion. By definition of intervals and $T$, $T - c_1 \ell^{k-1} < g(x) \le T$. Hence, $g(y) > T$. This means that $g_T(x) \ne g_T(y)$, and $y$ is a neighbor of $x$ in $H_{g_T}$. As $x \notin B_{\text{bad}}$, the number of such neighbors is $\Omega(\ell)$. 
        \item Finally, let $(x, y)$ be any edge in $H_{g_T}$, with $g_T(x) = 1$ and $g_T(y) = 0$, i.e., $g(x) \le T$ and $g(y) > T$. As discussed earlier, $g(y) - g(x) = O(\ell^{k-1})$, so we must have $g(x) > T - O(\ell^{k-1})$. This means that $x$ is from $g^{-1}(I_{i})$ for some $i = m - O(1)$. Hence, the total number of edges of $H_{g_T}$ is upper bounded by 
        \begin{align*}
            \sum_{i = m - O(1)}^m \ell \cdot |g^{-1}(I_{i})| = O(\ell |g^{-1}(I_m)|) = O(\ell |S|). 
        \end{align*}
    \end{itemize}
    Because of the three bullet points, we can apply \Cref{lem:many-high-deg-vertices} with input dimension $\ell$, $\lambda = 1/2$,
    $G = H_{g_T}$ and the above set $S$ to obtain $\N_p(g_T) = \Omega(\ell \log \ell) = \Omega(n \log n)$, which implies $\N_p(g) = \Omega(n \log n)$ and $\N_p(f) = \Omega(n \log n)$.
\end{proof}

\subsection{New Proofs of Previous Results}

In this section, we show that our main technical theorem is general enough to imply all previous known nontrivial lower bounds on the noisy query complexity. As discussed in \Cref{sec:intro}, it suffices to consider \Threshold{k}, symmetric functions, \Conn{} and \stConn{}. 

We first consider symmetric functions. 
A Boolean function $f: \{0, 1\}^n \rightarrow \{0, 1\}$ is symmetric if for any permutation $\pi$ on $[n]$ and any $x \in \{0, 1\}^n$, $f(x) = f(x_{\pi(1)}, x_{\pi(2)}, \ldots, x_{\pi(n)})$. In other words, the value of $f(x)$ only depends on the Hamming weight of $x$.

\begin{theorem}[\cite{feige1994computing}]
\label{thm:symmetric}
Let $f$ be a symmetric function. Let $1 \le k \le n$ be any integer such that there exists $y_0$ with Hamming weight $k - 1$ and $y_1$ with Hamming weight $k$ such that $f(y_0) \ne f(y_1)$. Then the noisy query complexity of $f$ is $\Omega(n \log \min\{k, n - k + 1\})$.
\end{theorem}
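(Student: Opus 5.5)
The plan is to apply \Cref{thm:main-general} with $G$ the induced subgraph of $H_f$ on the two Hamming layers $\{x: |x| = k-1\} \cup \{x : |x| = k\}$. Because $f$ is symmetric and $f(y_0) \ne f(y_1)$, $f$ is constant on each of these layers and takes different values on them. Hence every hypercube edge between the two layers lies in $H_f$. There are no edges inside a layer, so $G$ is exactly the bipartite graph of the two consecutive layers.

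The degrees are then immediate. A vertex of weight $k-1$ has $n-k+1$ up-neighbors, all in $G$. A vertex of weight $k$ has $k$ down-neighbors, all in $G$. So $d_1 = \min\{k, n-k+1\}$, and the edge set is nonempty since $1 \le k \le n$.

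For $d_2$, use the description of $\mu$ as ``pick a uniform edge, then a uniform endpoint.'' With probability exactly $1/2$ the endpoint is in layer $k$, which has degree $k$. Otherwise it is in layer $k-1$, which has degree $n-k+1$. Thus $\bP_{x\sim\mu}(\deg_G(x) \ge \max\{k, n-k+1\}) \ge 1/2$. Taking $\gamma = 1/2$ and $d_2 = \max\{k, n-k+1\} \ge d_1$, \Cref{thm:main-general} gives
\[
\N_p(f) \ge c \cdot \max\{k,n-k+1\} \log(1+\min\{k,n-k+1\}).
\]
Since $\max\{k, n-k+1\} \ge (n+1)/2$, this is $\Omega(n \log(1+\min\{k, n-k+1\}))$. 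This dominates $n \log \min\{k, n-k+1\}$, because $\log(1+m) \ge \log m$ for $m \ge 1$.

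There is essentially no obstacle. The only points to check are that the two whole layers are monochromatic, which is what symmetry gives, and the degree counts. The bound degenerates correctly in the edge cases $k = 1$ and $k = n$, where $\min\{k, n-k+1\} = 1$.
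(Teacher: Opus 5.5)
Your proof is correct and follows the paper's argument essentially verbatim. It uses the same ingredients: the induced subgraph of $H_f$ on Hamming layers $k-1$ and $k$, degrees $n-k+1$ and $k$, $\mu$ putting mass exactly $1/2$ on each layer, and \Cref{thm:main-general} with $d_1=\min\{k,n-k+1\}$, $d_2=\max\{k,n-k+1\}$, $\gamma=1/2$. Your treatment of $\log(1+d_1)$ versus $\log d_1$ and of the edge cases $k\in\{1,n\}$ is slightly more careful than the paper's.
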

\begin{proof}
For every $0 \le i \le n$, let $W_i$ be the set of all $x \in \{0,1\}^n$ with Hamming weight $i$. Consider the induced subgraph of $H_f$ on vertices $W_{k-1} \cup W_k$. Because $f$ is symmetric, $f(x) = f(y_0)$ for every $x \in W_{k-1}$ and $f(x) = f(y_1)$ for every $x \in W_k$. Hence, the degree of every vertex $x \in W_{k-1}$ in this induced subgraph is $n - k + 1$, and the degree of every $x \in W_k$ is $k$. The minimum degree is thus $d_1 \ge \min\{k, n - k + 1\}$. Let $\mu(x)$ be the distribution defined in \Cref{thm:main-general} where $\mu(x) = \frac{\deg_G(x)}{2 |E(G)|}$. Note that
\[
\Pr_{x \sim \mu}[x \in W_{k}] = \frac{|W_k| \cdot k}{2 |W_{k}| \cdot k} = 1/2.
\]
Therefore, a random $x$ sampled with respect to distribution $\mu$ has degree $k$ with $1/2$ probability, and has degree $n - k + 1$ with $1/2$ probability. Hence, letting $d_2 = \max\{k, n - k + 1\}$
\[
\Pr_{x \sim \mu}[\deg_G(x) \ge d_2] \ge 1/2.
\]
Therefore, \Cref{thm:main-general} implies that $\N_p(f) = \Omega(d_2 \log d_1) = \Omega(\max\{k, n - k + 1\} \log \min\{k, n - k + 1\}) = \Omega(n \log \min\{k, n - k + 1\})$.
\end{proof}
In the \Threshold{k} problem where $1 \le k \le n$, we are given $x \in \{0, 1\}^n$, and we need to determine whether $|x| \ge k$. As \Threshold{k} is a special case of symmetric functions, \Cref{thm:symmetric} immediately implies an $\Omega(n \log \min\{k, n - k + 1\})$ lower bound for the noisy query complexity of \Threshold{k}, which was originally shown in \citet{feige1994computing}.

In \Conn{}  for $N$-vertex graphs, the input is a Boolean vector $x \in \{0, 1\}^{\binom{[N]}{2}}$, where each $x_{\{u, v\}}$ for an unordered pair $\{u, v\}$ indicates whether the edge $(u, v)$ exists in the graph. The goal is to determine whether the graph is connected.

\begin{theorem}[\cite{gu2025tight}]
    The noisy query complexity for \Conn{}  for $N$-vertex graphs is $\Omega(N^2 \log N)$.
\end{theorem}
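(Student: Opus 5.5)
Here $n=\binom N2$, and the aim is $d_2=\Theta(N^2)$ and $d_1=N^{\Omega(1)}$, which gives $\Omega(N^2\log N)$ via \Cref{thm:main-general}. The natural hard pair is spanning trees versus forests with exactly two components. Let $V(G)=\mathcal T\cup\mathcal F$, where $\mathcal T$ is the set of spanning trees of $K_N$ (value $1$) and $\mathcal F$ is the set of spanning forests with exactly two trees, i.e.\ $N-2$ edges (value $0$). Take $G$ to be the induced subgraph of $H_f$ on these vertices.

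The edges of $G$ are exactly the pairs $(T,F)$ with $F=T-e$ for an edge $e$ of $T$. Two trees differ in edge count by $0$, so they are never adjacent in the hypercube, and the same holds for two forests. Hence:
\begin{itemize}
\item every tree has degree exactly $N-1$ in $G$;
\item a forest with components of sizes $a$ and $N-a$ has degree $a(N-a)$ in $G$, since these are the cross pairs whose addition connects it.
\end{itemize}
Both degrees are at least $N-1$, so $d_1\ge N-1$.

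By symmetry of the edge count, $\mu$ puts mass exactly $1/2$ on $\mathcal F$. The main step is to show that under $\mu$ restricted to $\mathcal F$, a constant fraction of the mass lies on forests with $\min(a,N-a)\ge \epsilon N$, so that their degree is at least $\epsilon(1-\epsilon)N^2=: d_2$. Equivalently, sample a uniformly random spanning tree (weight proportional to $\deg$ of forests corresponds to a uniform random edge of $G$, i.e.\ a uniform tree $T$ and a uniform edge $e\in T$). We then need that deleting a uniform random edge of a uniform spanning tree of $K_N$ yields a split with smaller side at least $\epsilon N$ with constant probability.

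Counting is cleanest. By the standard formula, the number of pairs (tree, edge) producing a split $(S,\bar S)$ with $|S|=a$ is
\[
a^{a-2}(N-a)^{N-a-2}\cdot a(N-a),
\]
and there are $\binom Na$ choices of $S$. The total number of pairs is $N^{N-2}(N-1)$. Using Stirling, $\binom Na a^{a-1}(N-a)^{N-a-1}/N^{N-2}$ behaves like $N^{-1/2}\,(a(N-a)/N)^{-1/2}$ times constants. The fraction for size $a$ is then about $c\,N^{1/2}a^{-1/2}(N-a)^{-1/2}/N$, which sums over $a\in[\epsilon N,(1-\epsilon)N]$ to a constant $\Theta(1)$ (a Riemann sum of $\int (t(1-t))^{-1/2}dt$). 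Alternatively, one can avoid Stirling by using a symmetry argument, but the explicit estimate suffices.

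Hence $\bP_{x\sim\mu}(\deg_G(x)\ge d_2)\ge\gamma$ for a constant $\gamma>0$. \Cref{thm:main-general} then gives
\[
\N_p\ge c\,d_2\log(1+d_1)=\Omega(N^2\log N).
\]
The hard step is this counting estimate; everything else is immediate.
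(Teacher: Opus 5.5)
\textbf{There is a genuine gap, and it is in exactly the step you flagged as the main one.}

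Your setup is correct. Take all spanning trees $\mathcal T$ and all two-component spanning forests $\mathcal F$. Every tree has degree $N-1$, a forest with sides $a,N-a$ has degree $a(N-a)$, and $\mu(\mathcal F)=1/2$.

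What fails is the claim that deleting a uniform edge of a uniform spanning tree leaves a smaller side of size at least $\epsilon N$ with constant probability. Applying Stirling to your own count gives, for $\epsilon N\le a\le (1-\epsilon)N$,
\[
\frac{\binom Na a^{a-1}(N-a)^{N-a-1}}{(N-1)N^{N-2}} \asymp \frac{N^{3/2}}{\bigl(a(N-a)\bigr)^{3/2}} \asymp N^{-3/2}.
\]
Summing over these $\Theta(N)$ values of $a$ gives $\Theta_\epsilon(N^{-1/2})$, not $\Theta(1)$. Your own displayed per-$a$ fraction is also of order $N^{-3/2}$ in this range, so your Riemann sum carries a prefactor $N^{-1/2}$ that you dropped.

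In fact, for fixed $a$ the ratio tends to $e^{-a}a^{a-1}/a!$, which is the Borel$(1)$ law; for instance, about a $1/e$ fraction of the edges of a random tree are pendant. So the smaller side is $O(1)$ with probability $1-o(1)$. In your $G$, the vertices of degree $\Theta(N^2)$ therefore carry only $\Theta(N^{-1/2})$ of the $\mu$-mass. For constant $\gamma$ the largest admissible $d_2$ is $O(N)$, and \Cref{thm:main-general} yields only $\Omega(N\log N)$.

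\textbf{The missing idea is to give up the clean minimum degree.} The paper keeps only the balanced forests $S_0$ (both sides at least $N/3$) together with all trees $S_1$. Then $G$ is bipartite with $S_0$ on one side, so every edge has an endpoint of degree $\Omega(N^2)$. The price is that trees may now have tiny or even zero degree: a uniform tree has only $\Theta(\sqrt N)$ balanced edges on average. So \Cref{thm:main-general} cannot be applied directly.

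This is what \Cref{lem:many-high-deg-vertices} is for. Its inputs are $|S_0|=\Omega(N^{-3/2})|S_1|$, i.e.\ $|S_0|=\Omega(n^{-3/4})|V(G)|$ with $n=\binom N2$, and $|E(G)|\le N^2|S_0|$. It iteratively deletes vertices of degree below $\Theta(n^{1/4})=\Theta(\sqrt N)$. What remains has $d_1=\Omega(\sqrt N)$ and still a constant $\mu$-fraction on forests of degree $\Omega(N^2)$. Since $\log\sqrt N=\Theta(\log N)$, this gives $\Omega(N^2\log N)$. The factor $N^{-3/2}$ in $|S_0|\ge\Omega(N^{-3/2})|S_1|$ reflects the same $N^{-1/2}$ phenomenon that breaks your estimate.
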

\begin{proof}
    Let $f$ denote \Conn{} throughout this proof. 

    Let $S_1$ denote the set of all spanning trees on $N$ vertices, and let $S_0$ denote the set of all disjoint union of two spanning trees $\{T_1, T_2\}$ ($T_1$ and $T_2$ are not ordered) where $|T_1| + |T_2| = N$ and $|T_1|, |T_2| \ge N / 3$. Clearly, $f(x) = 1$ for $x \in S_1$ and $f(x) = 0$ for $x \in S_0$. We use the following facts to estimate the sizes of $S_0$ and $S_1$:
    \begin{fact}
        The number of labeled $N$-vertex trees is $N^{N-2}$.
    \end{fact}
     \begin{fact}[Stirling's formula]
        There is a constant $\eps > 0$, so that for all $N \ge k \ge 1$, 
        \[
        \binom{N}{k} \ge \eps \cdot \frac{N^{N+1/2}}{k^{k+1/2}(N-k)^{N-k+1/2}}. 
        \]
    \end{fact}
    We immediately obtain $|S_1| = N^{N-2}$ and 
    \begin{align*}
        |S_0| &\ge \sum_{k=\lceil N/3 \rceil}^{\lfloor N/2-1 \rfloor} \binom{N}{k}k^{k-2} (N-k)^{N-k-2}\\      
        & \ge \sum_{k=\lceil N/3 \rceil}^{\lfloor N/2-1 \rfloor} \Omega\left(\frac{N^{N+1/2}}{k^{k+1/2} (N-k)^{N-k+1/2}} \cdot k^{k-2} (N-k)^{N-k-2}\right)\\
        & \ge \sum_{k=\lceil N/3 \rceil}^{\lfloor N/2-1 \rfloor} \Omega\left(N^{N-4.5}\right)\\
        & \ge \Omega(N^{N-3.5}) \ge \Omega(|S_1| / N^{1.5}). 
    \end{align*}

    Consider the induced subgraph of $H_f$ on the vertex set $S_0 \cup S_1$, which we call $G$. For every $x = T_1 \sqcup T_2 \in S_0$, $\deg_G(x) \ge \frac{2N^2}{9}$, because we can connect any edge between $T_1$ and $T_2$ to obtain a spanning tree. Because $G$ is bipartite where $S_0$ is one side of the bipartition,
    \begin{align*}
        |E(G)| \le N^2 \cdot |S_0|.
    \end{align*}

    Hence, we have obtained a subgraph $G$ of $H_f$ where there exists $S_0 \subseteq V(G)$ such that 
    \begin{itemize}
        \item $\deg_G(s) = \Omega(N^2)$ for every $s \in S_0$;
        \item $|S_0| \ge \Omega(N^{-1.5}) |V(G)|$;
        \item $|E(G)| \le N^2 |S_0|$.
    \end{itemize}
    Hence, we can apply \Cref{lem:many-high-deg-vertices} with $n\defeq \binom{N}{2}$, $S \defeq S_0$ and $\lambda = 3/4$ to obtain the $\Omega(N^2 \log N)$ lower bound.

\end{proof}

In the related \stConn{} problem, instead of determining whether a given graph is connected, we need to determine whether two fixed vertices $s$ and $t$ in the graph are connected. Via a similar proof, we can also show the lower bound for \stConn{}.

\begin{theorem}[\cite{gu2025tight}]
    The noisy query complexity for \stConn{}  for $N$-vertex graphs is $\Omega(N^2 \log N)$.
\end{theorem}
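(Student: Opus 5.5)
We will mimic the proof for \Conn{} and reduce to \Cref{lem:many-high-deg-vertices}. Let $f$ denote \stConn{} on vertex set $[N]$ with fixed $s\ne t$. We take $S_1$ to be the set of all spanning trees on $[N]$; each of these has $f=1$. We take $S_0$ to be the set of unordered disjoint unions $T_1\sqcup T_2$ of two trees with $V(T_1)\sqcup V(T_2)=[N]$, $s\in V(T_1)$, $t\in V(T_2)$, and $|V(T_1)|,|V(T_2)|\ge N/3$; each of these has $f=0$. Let $G$ be the induced subgraph of $H_f$ on $S_0\cup S_1$.

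\textbf{Structure of $G$.} Adding any edge between $V(T_1)$ and $V(T_2)$ to an element of $S_0$ produces a spanning tree, which is in $S_1$ and connects $s$ to $t$. Hence every $x\in S_0$ satisfies $\deg_G(x)\ge |V(T_1)||V(T_2)|\ge 2N^2/9$. Within $S_1$ every value of $f$ is $1$, and within $S_0$ every value is $0$. So $G$ is bipartite with one side $S_0$, and therefore $|E(G)|\le \binom N2|S_0|\le N^2|S_0|$. This covers the degree and edge-count hypotheses of \Cref{lem:many-high-deg-vertices}.

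\textbf{Counting $S_0$.} We have $|S_1|=N^{N-2}$ by Cayley's formula. For $|S_0|$, fix $k=|V(T_1)|\in[N/3,2N/3]$. The side containing $s$ is then determined by choosing the remaining $k-1$ vertices of $V(T_1)$ from $[N]\setminus\{s,t\}$, which can be done in $\binom{N-2}{k-1}$ ways. Using Cayley's formula on each side,
\[
|S_0|\ \ge \sum_{k=\lceil N/3\rceil}^{\lfloor 2N/3\rfloor}\binom{N-2}{k-1}k^{k-2}(N-k)^{N-k-2}.
\]
We have $\binom{N-2}{k-1}=\Theta\big(\tfrac{k(N-k)}{N^2}\big)\binom Nk=\Theta(1)\binom Nk$ in this range. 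So the same Stirling estimate as in the \Conn{} proof shows that each term is $\Omega(N^{N-4.5})$. Summing over $\Theta(N)$ values of $k$ gives $|S_0|=\Omega(N^{N-3.5})=\Omega(|S_1|/N^{1.5})$, and consequently $|S_0|\ge\Omega(N^{-1.5})|V(G)|$.

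\textbf{Applying the lemma.} We apply \Cref{lem:many-high-deg-vertices} with $n\defeq\binom N2$, $S\defeq S_0$, and $\lambda=3/4$, noting that $N^{-1.5}=\Theta(n^{-3/4})$. The degree bound $\Omega(N^2)=\Omega(n)$ and the edge bound $|E(G)|\le O(n)|S|$ both hold. The lemma then gives $\N_p(f)=\Omega(n\log n)=\Omega(N^2\log N)$.

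The only step that needs real checking is the size estimate for $S_0$. The constraint that $s$ and $t$ lie on different sides costs only a constant factor, because $\binom{N-2}{k-1}$ and $\binom Nk$ are within a constant factor of each other when $k=\Theta(N)$. Everything else carries over verbatim from the \Conn{} argument.
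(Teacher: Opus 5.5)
Your proposal is correct and matches the paper's proof essentially verbatim. It uses the same $S_0$ and $S_1$, the same observation that $\binom{N-2}{k-1}=\binom{N}{k}\frac{k(N-k)}{N(N-1)}$ loses only a constant factor, and the same application of the high-degree-vertices lemma with $n=\binom N2$, $S=S_0$, $\lambda=3/4$. Summing $k$ up to $2N/3$ instead of the paper's $N/2-1$ is harmless, since the two sides are already distinguished by which of $s,t$ they contain.
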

\begin{proofsketch}
    As the proof is very similar to the proof of the lower bound for \Conn{}, we only highlight the difference here. The initial subgraph of $H_f$ we will look at is on $S_0 \cup S_1$, where $S_1$ still denotes the set of all spanning trees, but $S_0$ denotes the set of all disjoint union of two spanning trees $\{T_1, T_2\}$ where $|T_1| + |T_2| = N$, $|T_1|, |T_2| \ge N / 3$, and additionally $s \in T_1, t \in T_2$. This affects the size of $S_0$. Now we have 
    \begin{align*}
        \nonumber |S_0| &\ge \sum_{k=\lceil N/3 \rceil}^{\lfloor N/2-1 \rfloor} \binom{N-2}{k-1}k^{k-2} (N-k)^{N-k-2}. 
    \end{align*}
    Because $\binom{N-2}{k-1} = \binom{N}{k} \cdot \frac{k(N-k)}{N(N-1)} = \Omega(\binom{N}{k})$ for $k$ in the specified range, we still obtain the same asymptotic lower bound for $|S_0|$. Hence, the  rest of the proof follows verbatim.
\end{proofsketch}

\section{Acknowledgment}
X.~L.~is supported by NSF Award CCF-2127575.
Y.~X.~is supported by NSF HDR TRIPODS Phase II grant 2217058 (EnCORE Institute).

\bibliographystyle{plainnat}
\bibliography{ref}

\end{document}